\newtheorem{proposition}{Proposition}
\tikzset{
  dot/.style={circle, fill=black, inner sep=1.2pt},
  edgelabel/.style={midway, sloped, above=2pt, font=\Large},
  smalllabel/.style={font=\scriptsize, inner sep=1pt},
}
\newcommand{\TriCoords}{%
  \coordinate (i) at (0,0);
  \coordinate (j) at (3,0);
  \coordinate (k) at (1.5,2.6);
}
\newcommand{\DiagA}{%
\begin{tikzpicture}
  \TriCoords
  \draw (i) -- node[midway, font=\footnotesize, sloped=false, xshift=-6pt] {P} (k)
        (k) -- node[midway, font=\footnotesize, sloped=false, xshift= 8pt] {S} (j)
        (i) -- node[edgelabel, font=\footnotesize, below=2pt, sloped] {T} (j);
  \node[dot, label={[font=\footnotesize]below left:{i}}]  at (i) {};
  \node[dot, label={[font=\footnotesize]below right:{j}}] at (j) {};
  \node[dot, label={[font=\footnotesize]above:{k}}]       at (k) {};
\end{tikzpicture}
}
\newcommand{\DiagB}{%
\begin{tikzpicture}
  \TriCoords
  \draw (i) -- node[midway, font=\footnotesize, sloped=false, xshift=-6pt] {P} (k)
        (k) -- (j)
        (i) -- node[edgelabel, font=\footnotesize, below=2pt, sloped] {T} (j);
  \node[dot, label={[font=\footnotesize]below left:{i}}]  at (i) {};
  \node[dot, label={[font=\footnotesize]below right:{j}}] at (j) {};
  \node[dot, label={[font=\footnotesize]above:{k}}]       at (k) {};
  \path (k) -- coordinate[midway] (midpoint) (j);
  \node[dot] at (midpoint) {};
  \path (k) -- node[pos=0.20, xshift=0.40cm, font=\footnotesize] {sp} (j);
  \path (k) -- node[pos=0.72, xshift=0.40cm, font=\footnotesize] {ts} (j);
\end{tikzpicture}
}
\newcommand{\DiagC}{%
\begin{tikzpicture}
  \TriCoords
  \path (k) -- coordinate[midway] (midpoint) (j);
  \node[dot] at (midpoint) {};
  \draw (k) -- (j);
  \draw (i) to[bend right=30] node[midway, below=6pt, font=\footnotesize] {T} (midpoint);
  \draw (i) to[bend left=30]  node[midway,  above=5pt, font=\footnotesize] {P} (midpoint);
  \draw (k) -- (midpoint) node[midway,  xshift=0.45cm, font=\footnotesize] {sp};
  \draw (midpoint) -- (j) node[midway, xshift=0.32cm, yshift=3pt, font=\footnotesize] {ts};
  \node[dot, label={[font=\footnotesize]below left:{i}}]  at (i) {};
  \node[dot, label={[font=\footnotesize]below right:{j}}] at (j) {};
  \node[dot, label={[font=\footnotesize]above:{k}}]       at (k) {};
\end{tikzpicture}
}
\newcommand{\DiagD}{%
\begin{tikzpicture}
  \coordinate (c) at (1.5,1.2);
  \coordinate (i) at (0,0);
  \coordinate (j) at (3,0);
  \coordinate (k) at (1.5,2.6);

  \draw (c) -- node[midway, font=\footnotesize, xshift=-7pt, yshift=8pt] {TP} (i)
        (c) -- (j) node[midway, font=\footnotesize, xshift=5pt,  yshift=10pt] {ts}
        (c) -- (k) node[midway, font=\footnotesize, xshift=12pt] {sp};

  \node[dot] at (c) {};

  \node[dot, label={[font=\footnotesize]below left:{i}}]  at (i) {};
  \node[dot, label={[font=\footnotesize]below right:{j}}] at (j) {};
  \node[dot, label={[font=\footnotesize]above:{k}}]       at (k) {};
\end{tikzpicture}
}
\newcommand{\Ltr}[1]{%
  \ifnum#1=1 a\else
  \ifnum#1=2 b\else
  \ifnum#1=3 c\else
  \ifnum#1=4 d\else
  e\fi\fi\fi\fi
}
\newcommand{\Up}[1]{%
  \ifnum#1=1 A\else
  \ifnum#1=2 B\else
  \ifnum#1=3 C\else
  \ifnum#1=4 D\else
  E\fi\fi\fi\fi
}
\newcommand{\Lo}[1]{%
  \ifnum#1=1 a\else
  \ifnum#1=2 b\else
  \ifnum#1=3 c\else
  \ifnum#1=4 d\else
  e\fi\fi\fi\fi
}
\newcolumntype{Y}{>{\centering\arraybackslash}m{3.0cm}} 
\newcolumntype{C}{>{\centering\arraybackslash}m{1.2cm}} 
\newcolumntype{E}{>{\centering\arraybackslash}m{3.2cm}} 
\newcolumntype{S}{>{\centering\arraybackslash}m{1.5cm}} 
\newcommand{\pl}{\mathbin{\rotatebox[origin=c]{90}{$\eqcirc$}}}
\newcommand{\p}{\mathbin{p}}
\newcommand{\s}{\mathbin{s}}
\newtheorem{example}{Example}
\title{A Perspective on the Algebra, Topology, and Logic of Electrical Networks}
\author{
Marko Orešković$^{1}$, Ivana Kuzmanović Ivičić$^{2}$\thanks{Corresponding author}, Juraj Benić$^{2}$, Mario Essert$^{3}$\\[1ex]
\small $^{1}$National and University Library in Zagreb, Croatia\\
\small $^{2}$School of Applied Mathematics and Informatics, University of Osijek, Croatia\\
\small $^{3}$Retired, formerly Faculty of Electrical Engineering and Computing, University of Zagreb\\[1ex]
\small \texttt{moreskovic@nsk.hr, ikuzmano@mathos.hr, jbenic@mathos.hr, messert@inet.hr }
}
\date{\today}
\begin{document}

\maketitle

\begin{abstract}
This paper presents a unified algebraic, topological, and logical framework for electrical one-port networks based on Šare’s \(m\)-theory. Within this formalism, networks are represented by \(m\)-words (jorbs) over an ordered alphabet, where series and parallel composition induce an \(m\)-topology on \(m\)-graphs with a theta mapping \(\vartheta\) that preserves one-port equivalence. The study formalizes quasi-orders, shells, and cores, showing their structural correspondence to network boundary conditions and impedance behavior. The \(\lambda\text{--}\Delta\) metric, together with the valuation morphism \(\Phi\), provides a concise descriptor of the impedance-degree structure.  
In the computational domain, the framework is extended with algorithmic procedures for generating and classifying non-isomorphic series–parallel topologies, accompanied by programmatic Cauer/Foster synthesis workflows and validation against canonical examples from Ladenheim’s catalogue. The resulting approach enables symbolic-to-topological translation of impedance functions, offering a constructive bridge between algebraic representation and electrical realization. Overall, the paper outlines a self-consistent theoretical and computational foundation for automated network synthesis, classification, and formal verification within the emerging field of Jorbology.
\end{abstract}
\section{Introduction}

Miroslav {\v S}are (1918--2005) discovered in the spring of 1957, within the
theory of linear electrical networks, the possibility of characterising the
structure of CRL two--terminal networks by words formed from a three--element alphabet,
perceiving algebraic
operations in the series and parallel connection of two--terminals. After thirteen years of work he presented, in the monograph
\emph{$m$--Brojevi} \cite{Sare1970}, an axiomatic characterisation of the discovered algebraic
structure, called the \emph{$m$--structure}, together with its application to the
theory of RLC networks.

The $m$--structure, initially referring only to so--called two--generator networks (i.e.\ those composed of just two kinds of elements such as inductances and capacitances, or resistances and capacitances, or resistances and inductances), was further elaborated in his doctoral dissertation \cite{Sare1972}, submitted at the Faculty of Electrical Engineering in Zagreb in 1973, where it was shown that the theory inaugurated new disciplines within network theory.

From that time until his advanced age the author succeeded in generalising the
two--generator $m$--structure to a structure with an arbitrary number of
generators. This work was published in December~2000 in the book entitled
\emph{Jorbologija}\footnote{``Rebmunology'': the “jorb” being the Croatian word \emph{broj} (“number”) read backwards.}\cite{Sare2000}. Only with the publication in a scientific journal \cite{Essert2017} did it become clear that the \(m\)-theory has a broader scope, just as the author had anticipated in his book.

The purpose of this article is twofold: first, to outline a mathematical formalism and to present an illustrative algorithm that serves as a starting point for generating topologically distinct realizations of one-port networks, providing a foundation for the development of more advanced methods in the future. This approach aims to contribute towards the improvement and extension of existing catalogs, such as Ladenheim’s, by moving in the direction of a more systematic classification of topological equivalents corresponding to a given impedance function, and by suggesting new synthesis procedures that may gradually bridge the gap between theoretical considerations and practical network design. Second, the article will provide a concise yet essential overview of $m$-theory, with the intention of fostering its further development.

The paper is organized as follows. Chapter~2 provides a general algebraic introduction and establishes the fundamental notation and conventions adopted throughout the text. 

Chapter~3 introduces the concept of \emph{jorbs}\footnote{The same mathematical construct is referred to as an \emph{\(m\)-word} when the emphasis is on symbolic concatenation and string structure; as an \emph{\(m\)-number} when its algebraic properties—such as operators, quasi-order, and structural relations—are under consideration; and as a \emph{jorb} when applied specifically to electrical networks.} (\emph{m-words}, \emph{m-numbers}) as abstract algebraic representations of electrical networks. 

Chapter~4 reviews the logical foundations relevant to the proposed framework, albeit within a conceptual setting distinct from that adopted in~\cite{Essert2017}. The subsequent chapters focus on the interpretation of the jorb as a symbolic representative of the network impedance~$Z(s)$. 

Chapter~6 addresses the application of computational methods for the generation of non-isomorphic impedance structures using jorbs and discusses potential improvements in the classification of electrical networks with respect to Ladenheim’s catalogue. 

Chapter~7 presents methodologically distinct approaches to the synthesis of one-port networks formulated through the jorb formalism. 

Finally, Chapter~8 and Chapter~9 outlines prospective directions for further research, emphasizing unresolved problems in the emerging field of \emph{Jorbology} and its potential applications in the broader context of network theory.

\section{Algebra of $m$‑Numbers}
	The Šare's \textbf{$m$-system}
	is a quadruple  ($\Gamma$, $<_\Gamma$, $\cdot$, $M_\Gamma$), where:
	\begin{itemize}
		\item set $\Gamma$ is an \textit{finite alphabet} (also caled $m$-alphabet) which consists of \textit{symbols}, that are \textit{totally ordered} by relation  $<_\Gamma$,
		\item '$\cdot$' denotes the \textit{concatenation} operation on symbols from $\Gamma$,
		\item  set $M_\Gamma$ consists of all words obtained by concatenation of an \textit{even number of symbols} from $\Gamma$ (also called \textit{$m$-words}). The concatenation of two \textit{$m$-words} also yields an \textit{$m$-word}.
	\end{itemize}

\subsection{The alphabet}

For an alphabet of $n$ symbols, which are usually represented by lowercase letters of the English alphabet, we can write: \(\Gamma_n = \{a_1, a_2, \ldots, a_n\}\) where \(a_1 <_\Gamma a_2 <_\Gamma \cdots <_\Gamma a_n\). The initial symbol of the alphabet will be denoted by \(\alpha(\Gamma_n)\), and the last one by \(\omega(\Gamma_n)\). 
The order of symbols, i.e., the relation relation $<_\Gamma$ is induced by the \textit{valuation function}:
\[
v: \Gamma \rightarrow \mathbb{Z}, \qquad a_i <_{\Gamma} a_j \;\;\Longleftrightarrow\;\; v(a_i) < v(a_j). 
\]
If $\Gamma \subset \mathbb{Z}$, then $v$ is usually the identity function. This function allows for determining the larger or lower symbol when comparing two symbols:

\begin{itemize}
	\item \textit{min/lower}  - for any $a,b\in\Gamma$: $$a\downarrow b:=\left\{\begin{array}{ll}
	a,& a<_{\Gamma} b,\\
	b,&\mbox{otherwise}.
	\end{array}\right.$$
	
	\item \textit{max/larger}  - for any $a,b\in\Gamma$: $$a\uparrow b:=\left\{\begin{array}{ll}
	b,& a<_{\Gamma} b,\\
	a,&\mbox{otherwise}.
	\end{array}\right.$$
\end{itemize}

\begin{example}\label{ex:jorb}
\leavevmode
\begin{itemize}
	\item $\Gamma_2=\{\beta, \psi\}$,  where $v(\beta)<v(\psi)$,  i.e.  $\beta<_{\Gamma}\psi$
	\\ $\psi\downarrow\beta=\beta$,\quad  $\psi\uparrow\beta=\psi$
	\item $\Gamma_3=\{a,b,c\}$, where the valuation is given by
$v(a) = -1$, $v(b) = 0$, $v(c) = 1$. 
Then, with respect to the induced order $<_\Gamma$, we have:\\
$a \uparrow c = c$,\quad $b \downarrow c = b$
	\item $\Gamma_5=\{1,2,3,4,5\}$, where it is $1<_\Gamma 2<_\Gamma 3<_\Gamma 4<_\Gamma 5$, we have:\\ $2 \uparrow 4 =4$ and $2 \downarrow 4 =2$.
\end{itemize}
\end{example}

The duality is a key concept in the $m$-alphabet, where each symbol has a corresponding dual with respect to the center or the axis of symmetry of the alphabet. If the alphabet has an odd number of symbols, then the symbol at the position of the middle element $\lceil \frac{n}{2} \rceil$ is self-dual. 

Symbols \(a_k\) and \(a'_k\) are mutually dual if the following holds: 
\begin{align}
a'_k = a_{n+1-k}
\end{align}
where \(k \in \{1, \ldots, n\}\). 

The duality of some symbols from the $\Gamma_5$ alphabet in the example above would be $2'=4$ and and $1'=5$. For symbol $3$ is  $3'=3$, which means that it is self-dual (the same holds for $b$ in $\Gamma_3$ alphabet). For $\Gamma_2$ alphabet, two symbols $\beta$ and $\psi$ are mutually dual.

It is common for the ordering of the alphabet to be defined through a valuation sequence (the range of the function $v$) in which two adjacent elements differ by 1, which will allow for easier determination of the distance $\delta(r, s)$ between any two symbols:
\begin{align}
\delta(r, s) = |v(r) - v(s)| \label{eq:del}
\end{align}
where \(r,s \in \Gamma\). 

For example, the distance $\delta(c, a)=2$ from the $\Gamma_3$ 
and the distance $\delta(2, 5)=3$ from the $\Gamma_5$. It is trivial that for the $\Gamma_2$ alphabet if the valuation sequence is \{0,1\} or \{1,2\}, then their mutual distance is $1$, whereas for the same alphabet with a valuation sequence of \{–1,1\}, the distance would be $2$.

\subsection{The concatenation}
Concatenation of two symbols from an $n$-symbol alphabet can be done in $n^2$ ways, which gives rise to a set of symbol pairs called $m$-atoms, which form the basis of the $m$-system. Since $m$-words are formed by concatenation of symbols from an alphabet, an alphabet with $n$ elements is said to be $n$-generator. For example, $\Gamma_3$ is a \textit{three-generator} alphabet, while $\Gamma_5$ is a \textit{five-generator} alphabet. The result of concatenation $x \cdot y $ is often written as $xy$.

Since the set $M_\Gamma$ consists of $m$-words that contain only an \textit{even number} of symbols, $m$-atoms are also the shortest $m$-words. 

For any $m$-word $x\in M_{\Gamma}$ there are $l_x$, the \textit{starting} symbol (\textit{left} char of $x$, $l(x)$) and $r_x$, the \textit{ending} symbol (\textit{right} char of $x$, $r_x$).

It should also be emphasized that \textit{a shell} of $x$ as a function $q:M_{\Gamma}\rightarrow M_{\Gamma}$ such that:
\begin{equation}
q(x):= l(x) r(x) = l_x r_x \label{qlj}
\end{equation}
The shell $q(\,)$ plays an important role in logic, where it represents a logical variable, while in electrical circuits it characterises the circuit’s behaviour at the two frequency extremes, namely at very low ($f=0$) and very high ($f=\infty$) frequencies.

 In electrical networks, the distance between $r_x$ and $l_x$ symbols is particularly important. It is defined analogously to the delta distance \eqref{eq:del} and is called the \textit{defect}:
\begin{align}
\Delta(r_x,l_x) = v(r_x) - v(l_x) \label{eq:def}
\end{align}

\subsection{The Set $M_\Gamma$ } 

The power of \textit{$m$-theory} lies in the fact that an RLC electrical network can be represented by a jorb ($m$-word) that reveals its impedance.  Although it is a formalized mathematical theory, its application holds valuable physical, topological, and logistical interpretations.

In order for working with jorbs as representations of electrical networks to be complete, it is necessary to introduce two additional key concepts: \textit{$\lambda$-length} and \textit{$m$-word compression}.

\subsubsection{$\lambda$-Length }
In mathematics, the length of a word, as an algebraic concept, usually refers to the number of symbols contained in the word. In \textit{jorbology}, the length of a $m$-word characterizes its "waviness". The length of a $m$-word is denoted by $\lambda$ and is equal to the alternated sum of the distances between adjacent letters of the $m$-word, starting with a negative sign.  More precisely, for all $n\in\mathbb{N}$ and for each $a_1, a_2, \ldots, a_{2n} \in \Gamma$:
\begin{equation}
\lambda(a_1 \cdot a_2 \cdot a_3 \cdot \ldots \cdot a_{2n}) = - \delta(a_1, a_2) + \delta(a_2, a_3) - \ldots - \delta(a_{2n-1}, a_{2n}) \label{lam}
\end{equation}

\begin{example}     
For $\Gamma_3 = \{a,b,c\}$ with valuation 
$v(a)=-1$, $v(b)=0$, $v(c)=1$, for the $m$-word 
\[
w = a \cdot b \cdot c \cdot a \cdot b \cdot c ,
\]
we compute
\[
\lambda(w) 
= -\delta(a,b) + \delta(b,c) - \delta(c,a) + \delta(a,b) - \delta(b,c).
\]
Since
\[
\delta(a,b) = |v(a)-v(b)| = 1, \quad
\delta(b,c) = |v(b)-v(c)| = 1, \quad
\delta(c,a) = |v(c)-v(a)| = 2,
\]
it follows that
\[
\lambda(w) = -1 + 1 - 2 + 1 - 1 = -2.
\]
\end{example}

\subsubsection{$m$-word Compression}

Using the concept of $m$-word length $\lambda$, one arrives at the relation of $m$-word compression. The equality of two $m$-words due to the  compression relation will be denoted by “$=_{zip}$”. The relation is first defined for the special case of compressing a four-letter word into a two-letter one, that is:
\begin{equation}
 p \cdot r \cdot s \cdot t =_{zip} p \cdot t \quad \iff \quad  \lambda(p \cdot r\cdot s \cdot t) = \lambda(p \cdot t) \label{saz4}
\end{equation}
for all symbols $p, r, s, t \in \Gamma$. In the general case, the compression of an arbitrary $m$-word $w \in M_\Gamma$ 
is obtained by iteratively applying the elementary rule~\eqref{saz4} to its subwords, 
and by requiring that the relation be preserved under concatenation. 
Formally, for all $m$-words $w_1,w_2,w_3,w_4 \in M_\Gamma$,
\begin{equation}
w_1 =_{zip} w_2 \text { \quad and\quad } w_3 =_{zip} w_4\quad\Longrightarrow\quad  w_1w_3 =_{zip} w_2w_4
\label{saz5}
\end{equation} 
It follows that the compression relation $=_{zip}$ defines an equivalence relation on $M_\Gamma$.

For three consecutive symbols in a $m$-word, where the first and the third are the same, the following compression law also holds:

\begin{proposition} Let $p, r \in \Gamma$ be arbitrary symbols. Then 
\begin{equation}
w_1 \cdot p \cdot r \cdot p \cdot w_2 =_{zip} w_1  \cdot p \cdot w_2 \label{saz3}
\end{equation}
where $w_1,w_2$ are subwords (one of which can be empty) such that $w=w_1 \cdot p \cdot r \cdot p \cdot w_2\in M_\Gamma$.
\end{proposition}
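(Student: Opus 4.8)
The plan is to reduce \eqref{saz3} to the elementary four-letter rule \eqref{saz4} together with the concatenation compatibility \eqref{saz5}, after a short parity-alignment step. The key observation is that, since $\delta$ is symmetric, a palindromic block $p\cdot r\cdot p$ is ``invisible'' to $\lambda$ once it is flanked by one further letter: for any $s\in\Gamma$ one computes $\lambda(p\cdot r\cdot p\cdot s)=-\delta(p,r)+\delta(r,p)-\delta(p,s)=-\delta(p,s)=\lambda(p\cdot s)$, and likewise $\lambda(z\cdot p\cdot r\cdot p)=-\delta(z,p)+\delta(p,r)-\delta(r,p)=-\delta(z,p)=\lambda(z\cdot p)$ for any $z\in\Gamma$. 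By the defining equivalence \eqref{saz4}, these two $\lambda$-identities say precisely that $p\cdot r\cdot p\cdot s=_{zip}p\cdot s$ and $z\cdot p\cdot r\cdot p=_{zip}z\cdot p$ as four-letter $m$-words.

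Next I would insert the block $p\cdot r\cdot p$ into an even-aligned four-letter window. Because $w=w_1\cdot p\cdot r\cdot p\cdot w_2\in M_\Gamma$ has even length, $|w_1|+|w_2|$ is odd, so exactly one of $w_1,w_2$ has even length. If $|w_1|$ is even (in particular if $w_1$ is empty), then $|w_2|$ is odd and hence nonempty; writing $w_2=s\cdot w_2'$ with $s\in\Gamma$ and $|w_2'|$ even, we get $w=w_1\cdot(p\cdot r\cdot p\cdot s)\cdot w_2'$ as a concatenation of the $m$-words $w_1$, $p\cdot r\cdot p\cdot s$, $w_2'$. Applying \eqref{saz5} (with reflexivity of $=_{zip}$ on $w_1$ and on $w_2'$, and the replacement $p\cdot r\cdot p\cdot s=_{zip}p\cdot s$) yields $w=_{zip}w_1\cdot p\cdot s\cdot w_2'=w_1\cdot p\cdot w_2$. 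If instead $|w_1|$ is odd, then $w_1$ is nonempty and $|w_2|$ is even; writing $w_1=w_1'\cdot z$ with $z\in\Gamma$ and $|w_1'|$ even, we get $w=w_1'\cdot(z\cdot p\cdot r\cdot p)\cdot w_2$, again a concatenation of $m$-words, and \eqref{saz5} together with $z\cdot p\cdot r\cdot p=_{zip}z\cdot p$ gives $w=_{zip}w_1'\cdot z\cdot p\cdot w_2=w_1\cdot p\cdot w_2$. These two cases exhaust all possibilities.

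I expect the only genuinely delicate point to be this parity/alignment bookkeeping: the relation $=_{zip}$ is generated by \eqref{saz4} and \eqref{saz5}, both of which act on honest $m$-words, so one cannot simply excise the odd-length block $p\cdot r\cdot p$ in isolation — it has to be absorbed into an even-length factor, which is exactly what forces the split according to the parity of $|w_1|$. The remaining ingredients are the one-line $\lambda$-computations above (relying only on symmetry of $\delta$) and the fact, already recorded after \eqref{saz5}, that $=_{zip}$ is an equivalence relation, so reflexivity and transitivity may be used freely when chaining the replacements.
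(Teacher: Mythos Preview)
Your argument is correct and follows the same underlying idea as the paper's proof: the symmetry of $\delta$ makes the contribution of the inserted pair $r\cdot p$ cancel in the alternating sum defining $\lambda$, after which one appeals to \eqref{saz4} and \eqref{saz5}. The paper simply records the global identity $\lambda(w)=\lambda(w_1\cdot p\cdot w_2)+\delta(p,r)-\delta(p,r)$ and then invokes \eqref{saz4}--\eqref{saz5} without further comment; your parity-alignment step (absorbing the odd-length block $p\cdot r\cdot p$ into a four-letter $m$-word by borrowing one letter from the side of odd length) is exactly the bookkeeping that makes that invocation honest, so your write-up is in fact more rigorous than the paper's sketch.
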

\begin{proof}  By the definition of the $m$-word length $\lambda$,
\[\lambda(w)=\lambda(w_1\cdot p\cdot w_2)+\delta(p,r)-\delta(p,r)=\lambda(w_1\cdot p\cdot w_2),\] so by \eqref{saz4} and \eqref{saz5}, $w_1 \cdot p \cdot r \cdot p \cdot w_2 =_{zip} w_1  \cdot p \cdot w_2$.
\end{proof}

  A more detailed algebraic treatment of the \(m\)-system, particularly its compression operations, is presented in \cite{essert2022vsare}.

\begin{example}\label{ex:jorb}
\leavevmode
For simplicity, instead of two consecutive small letters from alphabet, one uppercase letter is often written.
\begin{itemize}
	\item For $\Gamma_2=\{a, b\}$, $x = baba=_{zip} ba; \hspace{0.2cm} y=aaaaabbb=AAaBb=_{zip}aabb=AB$
	\item For $\Gamma_3=\{a, b, c\}$, $x = bcbaac=bcbAc=_{zip}bAc; \hspace{0.2cm} y=ABC=_{zip}AC$ because $\lambda(a \cdot a \cdot b \cdot b \cdot c \cdot c) = - \delta(a, a) +  \delta(a, b) - \delta(b, b) + \delta(b, c) - \delta(c, c) = - 0 + 1 - 0 + 1 - 0 = 2$ and
	$\lambda(a\cdot a \cdot c\cdot c) = -\delta(a, a) + \delta(a, c) -\delta(c, c)= -0+2-0=2$.  \\
	It is evident that the inverse operation is allowed, i.e., the expansion of an $m$-number: $AC =_{exp} ABC$ (of course, only when dealing with ‘uppercase letters,’ i.e., double identical lowercase ones).
\end{itemize}
\end{example}

\subsubsection{Unary Operators}

Let $D$ be the operator that acts on the entire $m$-number, which transforms a $m$-number into its dual form:
\begin{equation}
(\forall s,t \in \Gamma)\; \;D(st) = s't'\qquad \wedge \qquad(\forall x,y \in M_{\Gamma})\;(x=sty \rightarrow D(x)=s't'D(y)). \label{Dual}
\end{equation}

\begin {figure}[h]
\captionsetup{skip=7pt}
\centering
\setlength{\tabcolsep}{8pt}
\renewcommand{\arraystretch}{1.2}
\begin{tabular}{|c|c|c|c|c|}
\hline
$\cdot$ & 1 & D & E & F \\ \hline
1 & \cellcolor{lightgray}1 & D & E & F \\ \hline
D & D & \cellcolor{lightgray}1 & F & E \\ \hline
E & E & F & \cellcolor{lightgray}1 & D \\ \hline
F & F & E & D & \cellcolor{lightgray}1 \\ \hline
\end{tabular}
\caption{Permutation group}
\label{tab:pgr}
\end{figure}

For word \(x = a_1 \cdot a_2 \cdot \ldots \cdot a_{2k}\) from \(M_\Gamma\), its dual word is \(x' = a'_1 \cdot a'_2 \cdots \ldots \cdot a'_{2k}\), where \(a'_1, a'_2, \ldots, a'_{2k}\) are dual symbols corresponding to \(a_1, a_2, \ldots, a_{2k}\).

Operator E reverses the word, joining the letters from right (the last letter) to left (toward the first)
\begin{equation}
(\forall s,t \in \Gamma)\; \;E(st) = ts \qquad \wedge \qquad(\forall x,y \in M_{\Gamma})\;(x=yst \rightarrow E(x)=tsE(y)).  \label{Eop}
\end{equation}
The F operator is obtained by composing the D and E operators: 
\begin{equation}F(x)=D(E(x))=E(D(x))\label{Fop}
\end{equation}
Together with the identity, these operators under composition form the four-group (Klein group), as shown in the Cayley's table (Fig \ref{tab:pgr}).

\begin{example}
By the action of the operators on an $m$-number, two or four other $m$-numbers are obtained, or the $m$-number remains the same.
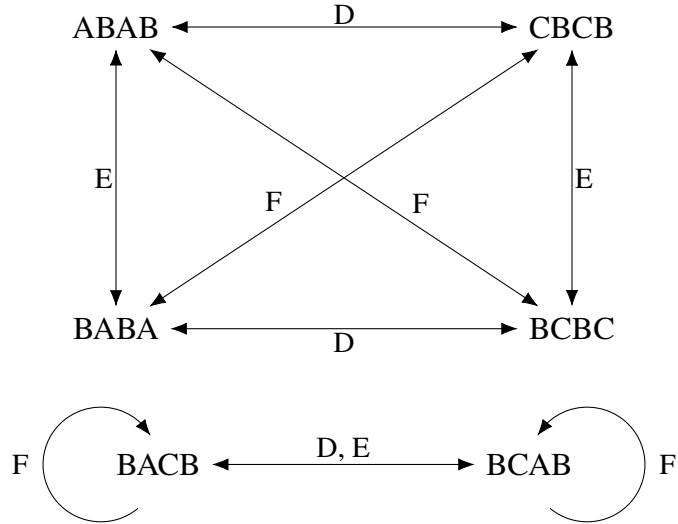
\begin{figure}[ht]
\centering
\begin{tikzpicture}[x=2.0cm,y=2.0cm] 

\tikzset{
  >={Latex[length=2.2mm]},
  lab/.style={font=\small, inner sep=1pt},
}

\node (TL) at (-0.5,2) {ABAB};
\node (TR) at ( 2.5,2) {CBCB};
\node (BL) at (-0.5,0) {BABA};
\node (BR) at ( 2.5,0) {BCBC};

\draw[<->] (TL) -- node[lab,left]  {E} (BL);
\draw[<->] (TR) -- node[lab,right] {E} (BR);

\draw[<->] (TL) -- node[lab,above] {D} (TR);
\draw[<->] (BL) -- node[lab,below] {D} (BR);


\draw[<->] (TL) --
  node[lab,above,pos=0.32,xshift=55pt, yshift=-32pt] {F} (BR);

\draw[<->] (TR) --
  node[lab,above,pos=0.68,yshift=4pt] {F} (BL);
  
\def\rloop{0.38}   
\def\shift{0.6}   

\coordinate (cL) at (-\shift,-0.9);
\coordinate (cR) at ( 2+\shift,-0.9);

\node (L) at ($(cL)+(0:\rloop)$) {BACB};
\node (R) at ($(cR)+(180:\rloop)$) {BCAB};

\draw[<->] (L.east) -- node[lab,above]{D, E} (R.west);

\draw[-{Latex[length=2.2mm]}]
  ($(cL)+(310:\rloop)$) arc[start angle=310, delta angle=-280, radius=\rloop];
\node[lab] at ($(cL)+(-\rloop-0.15,0)$) {F};

\draw[-{Latex[length=2.2mm]}]
  ($(cR)+(230:\rloop)$) arc[start angle=230, delta angle=280, radius=\rloop];
\node[lab] at ($(cR)+(\rloop+0.15,0)$) {F};

\end{tikzpicture}
\caption{Action of group elements D, E, and F}
\label{fig:goran}
\end{figure}

In Fig.\ref{fig:goran} the action of all unary operators is shown. Similar algebraic operations are performed in the classification of network subfamilies (see \cite{phdthesis}) or in the determination of equivalence classes (see \cite{JiangSmith2012}, \cite{morelli2019passive}).
\end{example}

\subsubsection{Relations $\leq_q$ and $\geq_q$}
We  define  relations $\leq_q$ and $\geq_q$ called \textit{q-less or equal } and \textit{q-more or equal } between two $m$-numbers $x,y\in M_\Gamma$ as follows:
\begin{equation}
\begin{aligned}
x \leq_q y &\;\Longleftrightarrow\; l_x \geq_\Gamma l_y \;\wedge\; r_x \leq_\Gamma r_y, \\
x \geq_q y &\;\Longleftrightarrow\; l_x \leq_\Gamma l_y \;\wedge\; r_x \geq_\Gamma r_y.
\end{aligned}
\end{equation}
It is easy to see that these two relations are reflexive and transitive so they are relations of partial-order  on $M_\Gamma$, i.e. $(M_\Gamma,\leq_q )$ and $(M_\Gamma,\geq_q )$ are quasi-ordered sets.

It is easy to notice that, because for every $x\in M_\Gamma$ holds that $l(q(x))=l_x$ and  $r(q(x))=r_x$, where $q$ is defined in (\ref{qlj}), it follows:
\begin{equation}\label{T2}
x\leq_q y\ \Longleftrightarrow\ \ q(x)\leq_q q(y),\ \ \ x\geq_q y\ \Longleftrightarrow\ \ q(x)\geq_q q(y).
\end{equation}
Further, we can define relation of equivalence $=_q$ on $M_\Gamma$ as
\begin{equation}
x=_q y \Longleftrightarrow x\leq_q y \wedge y\leq_q x.
\end{equation}
Now from \eqref{T2} easy follows that
\begin{equation}
x=_q y\Longleftrightarrow q(x)=q(y).
\end{equation}

For example, for some $m$-numbers in $\Gamma = \{0,1,\cdots,9\}$: $5322\leq_q 44332266$, $3326\geq_q 44332262$, $13\leq_q 1433$, $13\geq_q 1433$, $13=_q 1433$

\subsubsection{Relation of Equivalence of $m$-space}
Relation of equivalence $=_q$  generates partition of the set $M_\Gamma$ into $n^2$ classes of equivalence, where $n$ is a number of $m$-space generators (i.e. number of symbols in $\Gamma$). Obtained classes of equivalence are denoted as $[st]$, for all $s,t\in\Gamma$. For example, for $\Gamma=\{a,b,c\}$, we have $ba, baabbcca, BABA, BCBA,... \in [ba]$

With $m$-atom '\textbf{u}' it is easy to see that $[\textbf{u}]=\{x\in M_\Gamma: q(x)=\textbf{u}\}=[l(\textbf{u})r(\textbf{u})]$. For every $\textbf{u}\in M_\Gamma$, $([\textbf{u}],\cdot)$ is an infinite commutative group with concatenation operation '$\cdot$'.
The set of all classes of equivalence (i.e. quotient set) will be denoted by $[M_\Gamma]$, that is $[M_\Gamma]=\{[\textbf{u}]:\textbf{u}\in M_\Gamma\}$.

For every symbol $s\in\Gamma$ we define
\begin{equation}
Lideal(s)=\{x\in M_\Gamma: l_x=s\},\ \ \ Rideal(s)=\{x\in M_\Gamma: r_x=s\}.
\end{equation}
Because for all $s\in\Gamma$ is $Lideal(s)\cdot M_\Gamma=Lideal(s)$ and $M_\Gamma \cdot Rideal(s) =Rideal(s)$\footnote{Here concatenation between two sets is generalization of concatenation operation defined on $M_\Gamma$ in sense that $X \cdot Y =\{x \cdot y: x \in X, y \in Y\}$. For example, $\{aa,bb\}\cdot \{cc,bc,bb\}=\{aacc,aabc,aabb,bbcc,bbbc,bbbb\}$ }, $Rideal(s)$ is right ideal and $Lideal(s)$ is left ideal of semigroup $(M_\Gamma, \cdot)$. Notice that for $s,t\in\Gamma$ we have
\begin{equation}
Lideal(s)\cap Rideal(t)=[st]
\end{equation}
and
\begin{equation}
\{Lideal(s):s\in\Gamma\}\cap \{Rideal(s):s\in\Gamma\}=[M_\Gamma].
\end{equation}

In geometric interpretation, relation of equivalence ais represented by lines  which are intersections of the planes which represent the right and left ideals.

\begin{example} On Figure \ref{fig:re} right ideals are connected by solid lines and left ideals by dashed lines.
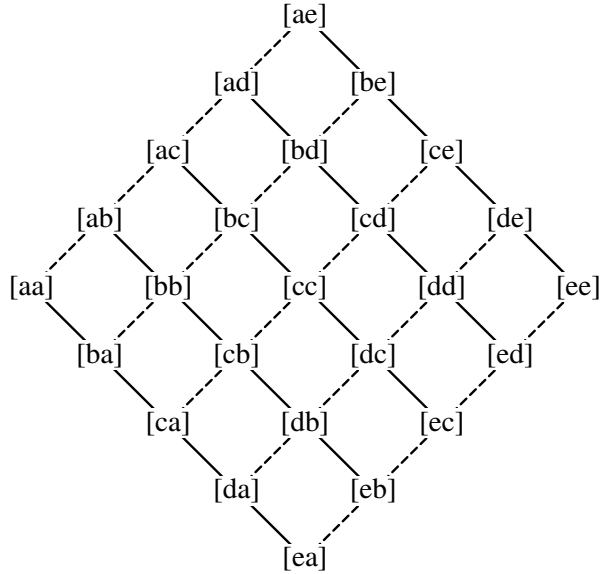
\begin{figure}[ht]
\centering
\begin{tikzpicture}[scale=0.9, line cap=round, line join=round]
  \foreach \i in {1,...,5} {
    \foreach \j in {1,...,5} {
      \coordinate (p\i\j) at ({\i+\j},{\j-\i});
    }
  }

  \foreach \i in {1,...,4} {
    \foreach \j in {1,...,5} {
      \draw[line width=0.9pt] (p\i\j) -- (p\the\numexpr\i+1\relax\j);
    }
  }

  \foreach \i in {1,...,5} {
    \foreach \j in {1,...,4} {
      \draw[line width=0.9pt, dash pattern=on 3pt off 2pt]
        (p\i\j) -- (p\i\the\numexpr\j+1\relax);
    }
  }

  \foreach \i in {1,...,5} {
    \foreach \j in {1,...,5} {
      \node[font=\small, fill=white, inner sep=1pt]
        at (p\i\j) {[{\Ltr{\i}}{\Ltr{\j}}]};
    }
  }
\end{tikzpicture}
\caption{Hasse diagram for $([M_\Gamma],\le_q)$ with $\Gamma=\{a,b,c,d,e\}$.}
\label{fig:re}
\end{figure}
\end{example}

\section{Jorb as an RLC Network Representation}
The set of jorbs in the $M_\Gamma$ space can be viewed as a set of impedance representations of individual networks, starting from the simplest electrical elements:
\begin{itemize}
	\item A capacitor with capacitance \textit{C} is represented by the $m$-atom \textit{'aa'}, which we will denote as 'A', 
	\item A resistor with resistance \textit{R} will be represented by \textit{'bb'} ('B'),
	\item An inductor with inductance \textit{L} will be represented by 'cc' ('C').
\end{itemize}
It is evident that in this case the three-generator alphabet is sufficient, i.e. $\Gamma_3 = \{a, b, c\}$ is used, with a valuation sequence of $\{1, 2, 3\}$ or alternatively $\{-1, 0, 1\}$.

\subsection{Series and Parallel Connections}

The first problem is how to represent an electrical network consisting of two branches that are connected in series (or in parallel)\footnote{For the series connection of two jorbs, we will use the symbol '$\eqcirc$' (or 's'), and for the parallel connection, the symbol '$\pl$' (or 'p'). }, each containing, for example, two different elements: a capacitor and a resistor, i.e., represented by the $m$-atoms $A$ and $B$. The resulting series impedance should not depend on the order of connection, i.e., whether $A$ is connected in series with $B$ or vice versa. To achieve this consistently in the algebra of $m$-atoms, M.~Šare in \cite{Sare2000} (p.~40) proposed the following general formula for the series connection of two jorbs:

\begin{equation}
x \eqcirc  y :=
(l_x \downarrow l_y) \cdot (l_x \downarrow l_y) \cdot x  \cdot (r_x \uparrow l_y)\cdot (r_x \downarrow l_y) \cdot y \cdot (r_x  \uparrow r_y) \cdot (r_x \uparrow r_y) \label{eq:ser}
\end{equation}
By substituting the simplest test case — $A$ in series with $B$ and $B$ in series with $A$ — into the formula \eqref{eq:ser}, we obtain: 
\begin{align*}
	  aa \eqcirc bb &= (a \downarrow b) \cdot (a \downarrow b) \cdot
	  aa  \cdot (a \uparrow b)\cdot (a \downarrow b) \cdot bb \cdot (a  \uparrow b) \cdot (a \uparrow b)\\
	  &= a  \cdot a  \cdot aa  \cdot b\cdot a  \cdot bb \cdot b \cdot b = aaaababbbb = AAbaBB =_{zip} aabb = AB\\
    bb \eqcirc aa &= (b \downarrow a) \cdot (b \downarrow a) \cdot
   bb  \cdot (b \uparrow a)\cdot (b \downarrow a) \cdot aa \cdot (b  \uparrow a) \cdot (b \uparrow a)\\
    &= a  \cdot a  \cdot bb  \cdot b\cdot a  \cdot aa \cdot b \cdot b = aabbbaaabb = ABbAaB =_{zip} aabb = AB
\end{align*}	
At first glance, jorb 'AAbaBB' is not equal to jorb 'ABbAaB'. However, after the application of the \textit{zip} function, it becomes evident that these two jorbs are equivalent.

Analogously, the same holds for the parallel connection (two edges connect at two points (nodes)) of two electrical branches: 
\begin{equation}
x \pl y := (l_x \uparrow l_y) \cdot (l_x \uparrow l_y) \cdot
x  \cdot (r_x \downarrow l_y)\cdot (r_x \uparrow l_y) \cdot y \cdot (r_x  \downarrow r_y) \cdot (r_x \downarrow r_y) \label{eq:par}
\end{equation}
In the same way, it can be verified that $ A \pl B = B \pl A = BA$.

So far, in the context of RLC networks, we have considered only three elements represented by jorbs. To elucidate the physical meaning of the remaining $M‑atoms$ in the alphabet $\Gamma_3$, we proceed by determining the shell of a jorb:
 \[q(AB) = l(AB) \cdot r(AB) = ab\] and similarly \[q(BA)=l(BA) \cdot r(BA) = ba \] which corresponds to:
\begin{enumerate}
\item \textbf{ab}: an \textit{open circuit} (very high impedance, no current flow for $f=0$) in the case of a series connection of a capacitor and a resistor under DC conditions;
\item \textbf{ba}: a \textit{short circuit} (low impedance, but not $0$ - primarily determined by the resistor) in the case of a parallel connection of a resistor and a capacitor.
\end{enumerate}
Indeed, connecting the branch '$ab$' in parallel with any $m$-atom over $\Gamma_2 = \{a, b\}$ leaves the $m$-atom (after compression) unchanged, while connecting '$ba$' in parallel results  in '$ba$'. For example,  $A \pl ab = AAAabA =_{zip} A$ and $B \pl ab = BBabaBb =_{zip} B$, while $A \pl ba = BAaBAa =_{zip} ba$ and $B \pl ab = BBBbAa =_{zip} ba$. The analogous statement holds for the impedance shells of an electrical network composed of multiple branches to which ‘ab’ or ‘ba’ is added in parallel. For example, $q(ABAB \pl ba) = ba$, while $q(ABAB \pl ab) = ab$.

In an analogous fashion one can treat the remaining $M$‑atoms ($ac$, $ca$, $bc$, $cb$); however, their behaviour assumes a different—sometimes dual—form.

\begin{example}\label{ex:elcirc}
\leavevmode\\
As an example, consider a typical electrical circuit consisting of two resistors, two capacitors, and one inductor, as shown in the figure ~\ref{fig:ec}. For $m$-numbers, the actual values of the electrical properties (resistance, capacitance, or inductance) are not relevant—only their type matters, while the structure is determined by the topology, i.e., the way the components are connected.

An intrinsic part of every electrical network is a \emph{graph} whose
branches represent electrical elements. By an \emph{$m$--graph} of a CRL network we mean the $m$--graph obtained by an
isomorphic mapping of the electrical network such that a capacitive branch
always corresponds to an $A$‑branch, a resistive branch to a $B$‑branch, and an
inductive branch to a $C$‑branch.

The corresponding topological graph of the same circuit is shown in Figure~\ref{fig:ec} (right), where the electrical elements are replaced by their equivalent $m$-atoms. 
\begin{figure}[ht]
\centering
\begin{minipage}{0.67\textwidth}
\begin{circuitikz}[american]
  \coordinate (N1) at (0,0);      
  \coordinate (N2) at (0,-3);     
  \coordinate (N3) at (4,0);      
  \coordinate (N4) at (4,-3);

  \draw
    (N1) node[left]{\large 1}
          to[R={$R_1 = 4\,\Omega$},*-*] (N3)
    (N2) node[left]{\large 2}
          to[R={$R_2 = 5\,\Omega$},*-*] (N4)
    (N1) to[C={$C_1 = 2\,\mu\text{F}$}] (N2)
    (N3) to[L={$L_1 = 3\,\text{mH}$}]   (N4);

  \draw
    (N3) to[short,-] ++(3,0)
          to[C={$C_2 = 20\,\text{nF}$}] ++(0,-3)
          to[short,-] (N4);

  \node[above] at (N3) {\large 3};
  \node[below] at (N4) {\large 4};
\end{circuitikz}
\end{minipage}
\begin{minipage}{0.30\textwidth}
\begin{circuitikz}[american]
  \coordinate (N1) at (0,0);
  \coordinate (N2) at (0,-3);
  \coordinate (N3) at (2,0);
  \coordinate (N4) at (2,-3);

  \draw
    (N1) node[left]{\large 1}
          to[short,l=B,*-*] (N3)
    (N2) node[left]{\large 2}
          to[short,l=B,*-*] (N4)
    (N1) to[short,l=A] (N2)
    (N3) to[short,l=C] (N4);

  \draw
    (N3) to[short,-] ++(1,0)
          to[short,l=A] ++(0,-3)
          to[short,-] (N4);

  \node[above] at (N3) {\large 3};
  \node[below] at (N4) {\large 4};
\end{circuitikz}
\end{minipage}
\caption{Electrical circuit with the corresponding $m$-graph}
\label{fig:ec}
\end{figure}
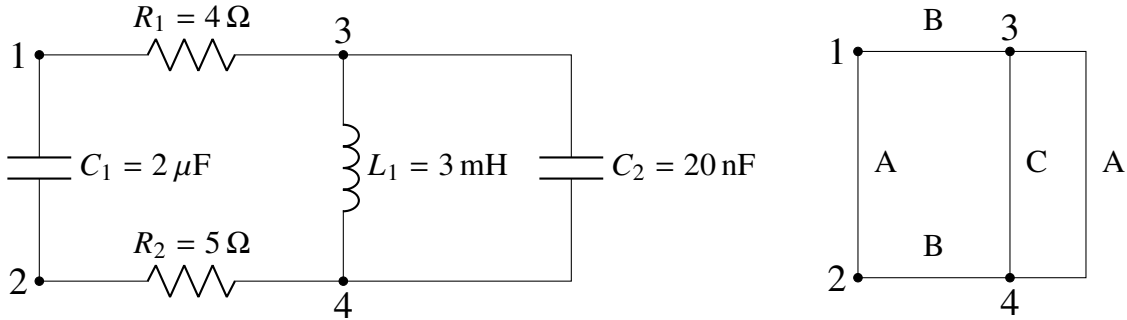

Using the rules defined by the equations (\ref{eq:ser}) and (\ref{eq:par}), the corresponding jorb between two nodes (i.e., a one-port network) can be easily computed. In this example, nodes $1$ and $2$ are considered in the first case, and nodes $1$ and $4$ in the second, and their respective jorbs are calculated. It is evident that the impedance between the selected nodes is not the same.

$Z_{(1,2)}= A \p (B \s (C \p A) \s B) = BCABA$

$Z_{(1,4)}= (A \s B) \p (B \s (C \p A)) = BACAB$

If an additional branch 'ca' (representing a "short circuit" for those components) is added in parallel to the parallel connection of A and C, then the total impedance of these parallel branches between nodes 3 and 4 would be precisely 'ca' i.e. $(C\; p\; A)\; p\; ca = CA\; p\; ca = ca$. As a result, the overall impedance—and thus the jorb—between nodes 1 and 4 would also change.

$Z_{(1,4)}= (A \s B) \p (B \s (C \p A \p ca)) = cbA$
\end{example}

\subsection{S-Core  and  P-Core}

There is $m$-atoms $\omega \alpha$ called \textit{s-zero} and  $m$-atoms $\alpha \omega $  called \textit{p-zero} which\footnote{We recall: $\alpha$ is the first element of the $\Gamma$ alphabet and $\omega$ is the last one.}, when connected in series or in parallel with any jorb, do not alter the value of the jorb they are connected to:
\begin{equation}
(\forall x \in M_\Gamma)\, (x \eqcirc \omega \alpha = x) \qquad \& \quad (x \pl \alpha
 \omega  = x) \label{spzero}
\end{equation}

The application of equation (12), in the case when it is not possible to create a node in the star connection for the delta–star transformation (fig.  \ref{fig:delta}), to the branches whose jorbs cannot reproduce the original branch’s jorb through series connection, often uses an $s-zero$ to split the branch into two parts, thereby resolving the problem.

The \textit{s-core} ($J_s$), respectively the \textit{p-core} ($J_p$), of a jorb is obtained by attaching an \textit{s-zero} or \textit{p-zero}, respectively, to both ends of the jorb. In $m$-topology, the \textit{s-core} of a jorb represents a fundamental topological \textit{invariant} of $m$-graphs, where it was first identified.
\begin{equation}
(\forall x \in M_\Gamma)\, (J_s(x) = \omega \alpha \cdot x \cdot \omega \alpha) \quad \& \quad (J_p(x) = \alpha \omega  \cdot x \cdot \alpha \omega) \label{Js}
\end{equation}

It follows straightforward that for every $x\in M_\Gamma$
\begin{equation}
J_s(x)=(x\eqcirc \alpha \omega)\pl \omega\alpha,\quad J_p(x)=(x \pl\omega\alpha)\eqcirc \alpha \omega.
\end{equation}

\begin{example} 

Let us verify that the $J_s$ indeed renders the jorb (i.e. the electrical network graph) invariant—meaning it yields the same jorb regardless of the pair of nodes from which it is observed. From the previous example (ex. \ref{ex:elcirc}), we should have:
\[J_s(BCABA)=J_s(BACAB)\]
Let's compute:
\begin{equation*}
\begin{aligned}J_s(BCABA)&\;=caBCABAca=cabbccaabbaaca=cbbabbccbbaabbaaca=_{zip}CABA\\
J_s(BACAB)&\;=caBACABca=cbbabbaabbccbbaabbcbba=_{zip}CABA
\end{aligned}
\end{equation*}
Recall that by the compression/expansion law: $ca=cbba$ and $ac=abbc$. Also, note that when written in uppercase letters, both expressions could be represented as $CBABA$, too. This confirms that the \textit{s-core} preserves the structure of the jorb, making it a topological invariant.
\end{example}
\subsection{S-Shell and P-Shell}

In every $m$-system there is one and only one $m$-number with property that if we add it by serial connection to s-core of $x$ we obtain $x$. This $m$-number is called s-shell of $x$ and it is denoted by $q_s(x)$. Similarly, p-shell $q_p(x)$ of $x$ is a unique $m$-number which added to p-core of $x$ by parallel connection gives $x$.  They are given by
\begin{equation}
\label{spshell}
\begin{aligned}
q_s(x) &= l_x(l_x\downarrow r_x)(l_x\uparrow r_x)r_x,\\
q_p(x) &= l_x(l_x\uparrow r_x)(l_x\downarrow r_x)r_x.
\end{aligned}
\end{equation}
It is easy to notice that if two $m$-numbers have the same shell, then they also have the  same s-shell as well as the same p-shell, i.e. \[q(x)=q(y)\Longrightarrow q_s(x)=q_s(y) \wedge q_p(x)=q_p(y).\]

Moreover, for the operator $I(x)=l_x x r_x$ ($x\in M_\Gamma$) we have
\begin{equation}\label{Ix}
x\eqcirc I(x)=q_p(x),\qquad x\pl I(x)=q_s(x),
\end{equation}
and
\begin{equation}\label{JsJp}
J_s(x)\eqcirc q_s(x)=x,\qquad J_p(x)\pl q_p(x)=x.
\end{equation}

Finally, the sets
\begin{equation}\label{Omega}
\Omega_s=\{q_s(x):x\in M_\Gamma\},\qquad 
\Omega_p=\{q_p(x):x\in M_\Gamma\},
\end{equation}
are called the \emph{s-basis} and \emph{p-basis} of the $m$-space.

s-base is the biggest subset of $M_\Gamma$ which is closed under $\eqcirc$ operation. $ (\Omega_s,\leq_q)$ is a partially ordered set. On the whole $M_\Gamma$, relation $\leq_q$ is transitive and reflexive so it is also on $\Omega_s$. But on the whole $M_\Gamma$ it is not antisymmetric, while on $\Omega_s$ it is because $q_s(x)\leq_q q_s(y)$ and $q_s(y)\leq_q q_s(x)$ imply $l_x=l(y)$ and $r_x=r(y)$, which means that $q_s(x)=q_s(y)$.

Further, because $(M_\Gamma, \eqcirc)$ is commutative semigroup and each  element in $\Omega_s$ is idempotent i.e. $q_s(x)\eqcirc q_s(x)=q_s(x)$, it means that  $(\Omega_s,\eqcirc)$ is a commutative band, i.e. upper semilattice with respect to $\leq_q$ relation.

On the other hand, $(\Omega_p,\pl)$ is a commutative band and  lower semilattice with respect to $\leq_q$ relation and upper semilattice with respect to $\geq_q$. relation.

\begin{figure}[ht]
\centering

\begin{minipage}[t]{0.47\textwidth}
\centering
\textit{(a) $(\Omega_s,\le_q)$}\\[0.3em]
\begin{tikzpicture}[scale=0.9, line cap=round, line join=round]
  \foreach \i in {1,...,5} {
    \foreach \j in {1,...,5} {
      \coordinate (s\i\j) at ({\i+\j},{\j-\i});
    }
  }
  \foreach \i in {1,...,4} {
    \foreach \j in {1,...,5} {
      \draw (s\i\j) -- (s\the\numexpr\i+1\relax\j);
    }
  }
  \foreach \i in {1,...,5} {
    \foreach \j in {1,...,4} {
      \draw (s\i\j) -- (s\i\the\numexpr\j+1\relax);
    }
  }
  \foreach \i in {1,...,5} {
    \foreach \j in {1,...,5} {
      \ifnum\i<\j
        \node[font=\small, fill=white, inner sep=1pt] at (s\i\j) {\Up{\i}\Up{\j}};
      \else\ifnum\i=\j
        \node[font=\small, fill=white, inner sep=1pt] at (s\i\j) {\Up{\i}};
      \else
        \node[font=\small, fill=white, inner sep=1pt] at (s\i\j) {\Lo{\i}\Lo{\j}};
      \fi\fi
    }
  }
\end{tikzpicture}
\end{minipage}
\hfill
\begin{minipage}[t]{0.47\textwidth}
\centering
\textit{(b) $(\Omega_p,\ge_q)$}\\[0.3em]
\begin{tikzpicture}[scale=0.9, line cap=round, line join=round]
  \foreach \i in {1,...,5} {
    \foreach \j in {1,...,5} {
      \coordinate (p\i\j) at ({\i+\j},{\j-\i});
    }
  }
  \foreach \i in {1,...,4} {
    \foreach \j in {1,...,5} {
      \draw (p\i\j) -- (p\the\numexpr\i+1\relax\j);
    }
  }
  \foreach \i in {1,...,5} {
    \foreach \j in {1,...,4} {
      \draw (p\i\j) -- (p\i\the\numexpr\j+1\relax);
    }
  }
  \foreach \i in {1,...,5} {
    \foreach \j in {1,...,5} {
      \ifnum\i<\j
        \node[font=\small, fill=white, inner sep=1pt] at (p\i\j) {\Up{\j}\Up{\i}};
      \else\ifnum\i=\j
        \node[font=\small, fill=white, inner sep=1pt] at (p\i\j) {\Up{\i}};
      \else
        \node[font=\small, fill=white, inner sep=1pt] at (p\i\j) {\Lo{\j}\Lo{\i}};
      \fi\fi
    }
  }
\end{tikzpicture}
\end{minipage}

\caption{Hasse diagrams for $(\Omega_s,\le_q)$ i $(\Omega_p,\ge_q)$.}
\label{fig:omega-hasse}
\end{figure}
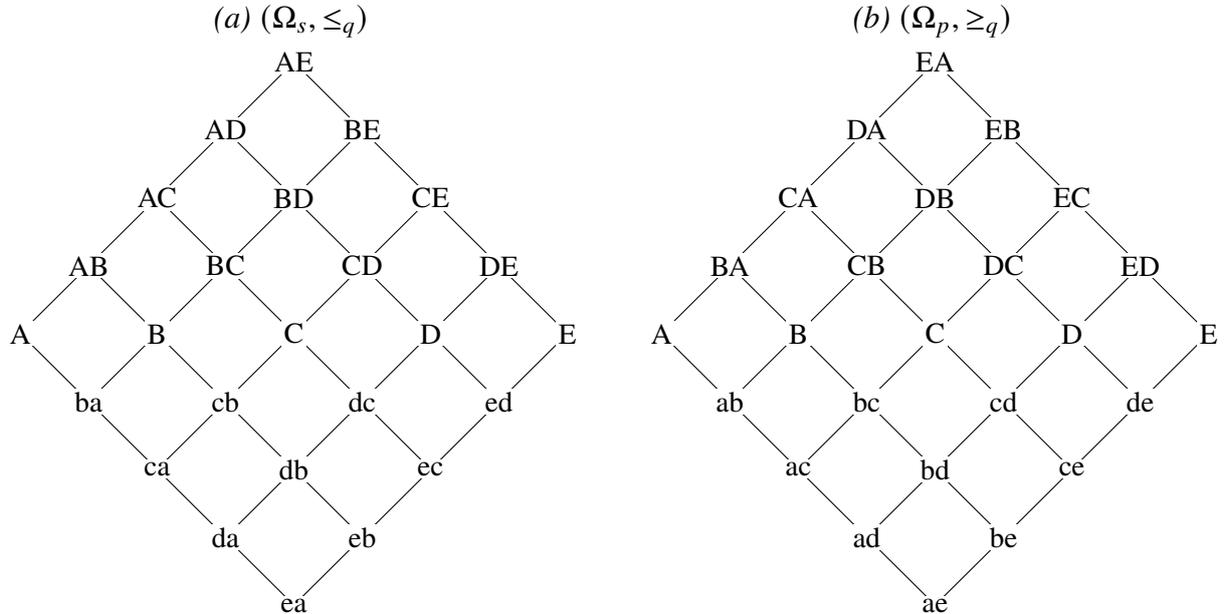

\begin{example} 

Let $\Gamma=\{a,b,c,d,e\}$. Then
$$\Omega_s=\{A,AB,AC,AD,AE,ba,B,BC,BD,BE,ca,cb,C,CD,CE,da,db,dc,D,DE,ea,eb,ec,ed,E\}$$ and
$$\Omega_p=\{A,ab,ac,ad,ae,BA,B,bc,bd,be,CA,CB,C,cd,ce,DA,DB,DC,D,de,EA,EB,EC,ED,E\}.$$
Hasse diagrams of $(\Omega_s,\leq_q)$ and $(\Omega_p,\geq_q)$
are given on Figure \ref{fig:omega-hasse}.

\end{example}

\subsection{$m$-Topology}

An \(m\)-graph $G_\Gamma$ is an ordered pair \((V,E)\), where \(V\) is any nonempty set whose elements are called the vertices of the \(m\)-graph.
The second component \(E\) is a relation whose domain is the set of unordered pairs of vertices and whose range is the set of jorbs \(M_\Gamma\). The \(m\)-topology is also a generalization of the standard topology of a linear geometric graph, since the latter can be  obtained as a special case of the \(m\)-topology.

The edges of an \(m\)-graph are ordered pairs \((\{j,k\},x)\in E\), where \(j,k\) are the vertices of the edge and \(x\) is a jorb, called the name of the edge.  In \(m\)-topology, the function that assigns to each edge (or two-pole network) 
its jorb name is called the \emph{theta function}, denoted by~\(\vartheta\).  The name of an edge is also understood as the name of a binary relation on the vertex set.
For every chain \(\Gamma\), for every set of jorbs \(M_\Gamma\), and for every nonempty set \(X\), the pair \((X,Y)\) is an \(m\)-graph over \(M_\Gamma\) if and only if
\begin{equation}
  Y \;\subseteq\; \bigl\{\,(\{j,k\},x)\;:\; j,k\in X \ \wedge\ x\in M_\Gamma \,\bigr\}.
\end{equation}
\noindent

In this sense, if \(j(y)k\) is the \(m\)-two-pole (one-port network) obtained from \(j(x)k\) by replacing two series edges in \(x\) with a single edge whose name equals the series sum of the names of the two replaced edges, then \(\vartheta(j(y)k)=\vartheta(j(x)k)\).
Similarly, the theta of an \(m\)-two-pole remains unchanged if two parallel edges are replaced by a single edge whose name equals the parallel sum of the names of the replaced edges.
By the procedure described, every nonseparable series--parallel \(m\)-two-pole can be reduced to a single edge whose name equals the theta of the two-pole.
Through \(\vartheta\), all unary functions of the \(m\)-system are carried over to sets of \(m\)-two-poles; in this sense one also speaks of the shell, core, length, etc., of an \(m\)-two-pole.

Here we list two fundamental transformations of the \(m\)-topology:

\begin{itemize}
\item \textbf{Law of modular distributivity.}

For the configurations \(G_1\) and \(G_2\) in Fig.~\ref{fig:lod}, with edge names \(u,v,w\in M_\Gamma\) labeled as shown, we have
\begin{equation}
  \vartheta(G_1) = \vartheta(G_2)
  \;\Longleftrightarrow\;
  u \,\ge_q\, w
  \;\Longleftrightarrow\;
  \bigl(\,\ell(u)\,\le_{\Gamma}\,\ell(w)\ \wedge\ r(u)\,\ge_{\Gamma}\,r(w)\,\bigr).
  \label{eq:distr}
\end{equation}
On this rule one can base procedures that, in many cases (in particular for series--parallel two-poles and with admissible $\Delta\!\leftrightarrow\!Y$ transformations where needed), reduce an \(m\)-two-pole while keeping its \(\vartheta\)-value unchanged.

\tikzset{
  dot/.style={circle,fill=black,inner sep=1.5pt},
  numlabel/.style={font=\large},
  seglabel/.style={font=\Large},
}

\newcommand{\DiagramX}{%
  \begin{tikzpicture}[baseline=(current bounding box.south)]
    \coordinate (A1) at (0,0);
    \coordinate (A2) at (2,0);
    \coordinate (A3) at (4,0);
    \draw (A1) -- (A3);
    \draw ($(A1)+(0.08,0)$) .. controls +(1.2,1.6) and +(-1.2,1.6) ..
          ($(A3)+(-0.08,0)$)
          node[pos=.50,above=6pt,seglabel] {$u$};
    \node[seglabel] at ($(A1)!0.55!(A2)+(0,-0.35)$) {$v$};
    \node[seglabel] at ($(A2)!0.55!(A3)+(0,-0.35)$) {$w$};
    \node[dot] at (A1) {}; \node[numlabel] at ($(A1)+(-0.10,-0.38)$) {1};
    \node[dot] at (A2) {}; \node[numlabel] at ($(A2)+(0,0.38)$) {2};
    \node[dot] at (A3) {}; \node[numlabel] at ($(A3)+(0.10,-0.38)$) {3};
  \end{tikzpicture}%
}

\newcommand{\DiagramY}{%
  \begin{tikzpicture}[baseline=(current bounding box.south)]
    \coordinate (B1) at (0,0);
    \coordinate (B2) at (2,0);
    \coordinate (B3) at (4,0);
    \draw (B1) -- (B3);
    \node[seglabel] at ($(B1)!0.55!(B2)+(0,-0.35)$) {$v$};
    \node[seglabel] at ($(B2)!0.55!(B3)+(0,-0.35)$) {$w$};
    \draw ($(B1)+(0.08,0)$) .. controls +(0.9,1.3) and +(-0.9,1.3) ..
          ($(B2)+(-0.08,0)$)
          node[pos=.50,above=6pt,seglabel] {$u$};
    \node[dot] at (B1) {}; \node[numlabel] at ($(B1)+(-0.10,-0.38)$) {1};
    \node[dot] at (B2) {}; \node[numlabel] at ($(B2)+(0,0.38)$) {2};
    \node[dot] at (B3) {}; \node[numlabel] at ($(B3)+(0.10,-0.38)$) {3};
  \end{tikzpicture}%
}

\begin{figure}[h!]
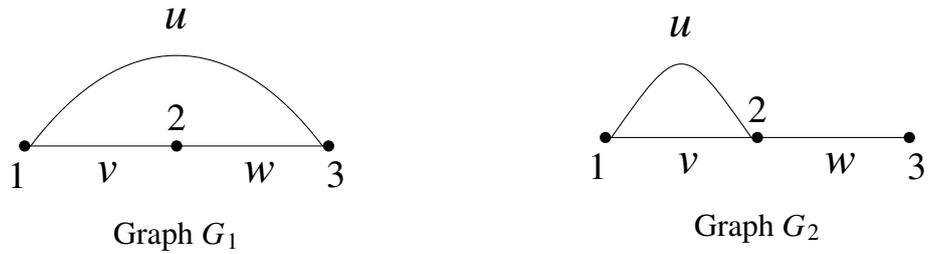

  \centering
  \begin{minipage}{0.48\textwidth}
    \centering
    \DiagramX \\[0.5ex]
    Graph \(G_1\)
  \end{minipage}\hspace{0pt}%
  \begin{minipage}{0.48\textwidth}
    \centering
    \DiagramY \\[0.5ex]
    Graph \(G_2\)
  \end{minipage}
  \caption{The law of modular distributivity}
  \label{fig:lod}
\end{figure}

\item \textbf{Triangle-to-star ($\Delta$--$Y$) transformation.}

An edge subdivision replaces a single edge by a series connection of two edges, with the name of the original edge equal to the series sum of the new names.  
By subdividing edges and applying the law of modular distributivity (and, where applicable, $\Delta\!\leftrightarrow\!Y$ transforms), many two-poles can be transformed into a series--parallel two-pole.

\begin{figure}[h!]
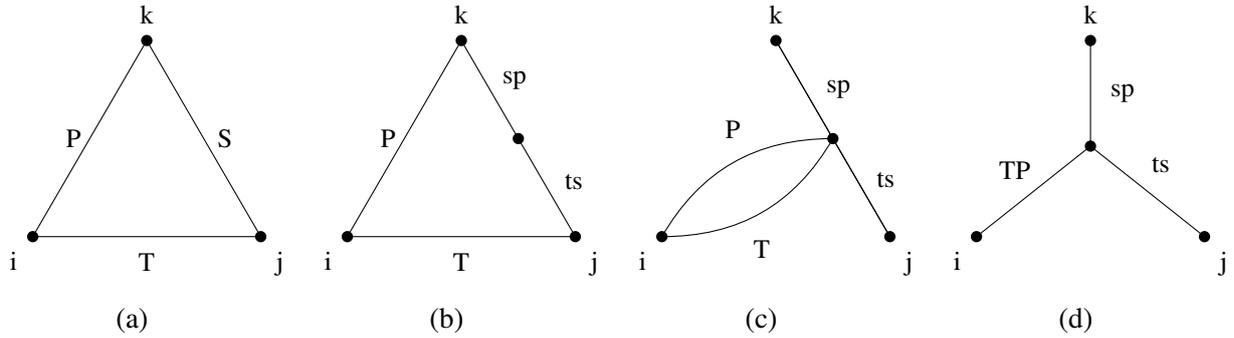

  \centering
  \begin{subfigure}{0.22\linewidth}
    \centering
    \DiagA
    \caption{}
  \end{subfigure}\hfill
  \begin{subfigure}{0.22\linewidth}
    \centering
    \DiagB
    \caption{}
  \end{subfigure}\hfill
  \begin{subfigure}{0.22\linewidth}
    \centering
    \DiagC
    \caption{}
  \end{subfigure}\hfill
  \begin{subfigure}{0.22\linewidth}
    \centering
    \DiagD
    \caption{}
  \end{subfigure}
  \caption{$\Delta$--$Y$ transformation}
  \label{fig:delta}
\end{figure}

The procedure for converting a $\Delta$ graph (Fig.~\ref{fig:delta}(a)) into a $Y$ graph is as follows.  
Let \(V = \{i,j,k\}\), and let \(p,s,t\in\Gamma\) satisfy \(p \le_\Gamma s \le_\Gamma t\).

\begin{enumerate}
  \item Split one branch into two parts by introducing a new, unnamed node and assign \(m\)-atoms to the new branches such that their series connection equals the original \(m\)-number of that branch (Fig.~\ref{fig:delta}(b)).  
        If this is not possible, use the \(s\)-zero, which in series with any \(m\)-number yields that \(m\)-number.
  \item By the law of modular distributivity (Eq.~\ref{eq:distr}), the remaining branches with their nodes may ``slide'' (Fig.~\ref{fig:lod}) along the newly formed branches toward the unnamed node (Fig.~\ref{fig:delta}(c)).
  \item Finally, the branch of the star obtained by the parallel connection of two branches receives the jorb name of that connection (TP in Fig.~\ref{fig:delta}(d)).
\end{enumerate}
\end{itemize}

\noindent
\begin{example}

This example (Fig.~\ref{fig:k33}) shows the transformation of the double Wheatstone bridge
(the \emph{bridge ladder}—two Wheatstone bridges in series).  
From a graph perspective, it is a two-terminal network (terminals 1 and 4) that is not series–parallel because it contains \(K_4\) as a minor.

\begin{figure}[h!]

\begin{tikzpicture}[
  edgelabel/.style={fill=white,inner sep=1pt}
]

\coordinate (x1) at (0,0);
\coordinate (n2) at (1.4,1.2);
\coordinate (n3) at (3.4,1.2);
\coordinate (n4) at (4.8,0);
\coordinate (n5) at (3.4,-1.2);
\coordinate (n6) at (1.4,-1.2);

\node[draw=none,fill=none] at (-0.70,-0.015) {$X=1$};

\fill (x1) circle (1.2pt);
\fill (n2) circle (1.2pt);    \node[above=2pt of n2] {2};
\fill (n3) circle (1.2pt);    \node[above=2pt of n3] {3};
\fill (n4) circle (1.2pt);    \node[right=2pt of n4] {4};
\fill (n5) circle (1.2pt);    \node[below=2pt of n5] {5};
\fill (n6) circle (1.2pt);    \node[below=2pt of n6] {6};

\draw (x1) -- node[edgelabel,sloped,above]{B} (n2);
\draw (n2) -- node[edgelabel,above]{A} (n3);           
\draw (n3) -- node[edgelabel,sloped,above]{B} (n4);
\draw (n4) -- node[edgelabel,sloped,below]{A} (n5);
\draw (n6) -- node[edgelabel,below]{B} (n5);           
\draw (x1) -- node[edgelabel,sloped,below]{A} (n6);

\draw (x1) -- node[edgelabel,pos=0.24,below]{B} (n4);  

\draw (n2) -- node[edgelabel,pos=0.75,below left]{A} (n5);

\draw (n3) -- node[edgelabel,sloped,below,pos=0.10,rotate=95,allow upside down,xshift=7pt]{B} (n6);

\coordinate (y1) at (7.0,0);
\coordinate (m2) at (8.5,1.2);
\coordinate (m3) at (10.5,1.2);   
\coordinate (m4) at (11.8,0);
\coordinate (m5) at (10.4,-1.2);  
\coordinate (m6) at (8.6,-1.2);

\node[draw=none,fill=none] at (6.30,-0.015) {$Y=1$};

\fill (y1) circle (1.2pt);
\fill (m2) circle (1.2pt);   \node[above=2pt of m2] {2};
\fill (m3) circle (1.2pt);   \node[above=2pt of m3] {3};
\fill (m4) circle (1.2pt);   \node[right=2pt of m4] {4};
\fill (m5) circle (1.2pt);   \node[below=2pt of m5] {5};
\fill (m6) circle (1.2pt);   \node[below=2pt of m6] {6};

\draw (y1) -- node[edgelabel,sloped,above]{B} (m2);
\draw (y1) -- node[edgelabel,sloped,below]{A} (m6);

\draw (m2) to[bend left=28]  node[edgelabel,above]{A} (m3);
\draw (m2) to[bend right=28] node[edgelabel,below]{B} (m3);

\draw (m3) to[bend left=26]  node[edgelabel,above,pos=0.55,yshift=4pt]{B} (m4);
\draw (m3) to[bend right=26] node[edgelabel,below,pos=0.25,yshift=-7pt]{A} (m4);

\draw (m4) to[bend left=36]  node[edgelabel,right,pos=0.55,xshift=5pt]{A} (m5);
\draw (m4) to[bend right=26] node[edgelabel,right,pos=0.50,xshift=-15pt]{B} (m5);

\draw (m6) -- node[edgelabel,below]{B} (m5);

\end{tikzpicture}
\caption{The transformation of double Wheatstone bridge graph}
\label{fig:k33}
\end{figure}
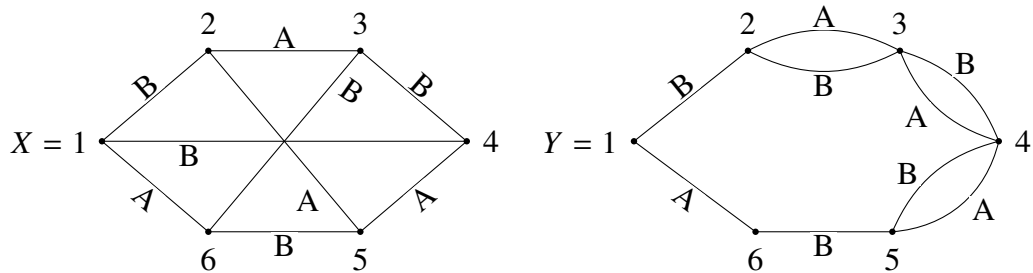

It is easy to see that branch \(B\) \((1\!-\!4)\) can be ``slid'' from node 1 to node 2
and simultaneously from node 4 to node 3.  
Similarly, branch \(A\) \((2\!-\!5)\) can be made parallel to branch \(B\) \((3\!-\!4)\),
and branch \(B\) \((3\!-\!6)\) parallel to branch \(A\) \((4\!-\!5)\),
which enables a series–parallel reduction and computation of the equivalent
impedance between nodes 1 and 4 (or any other pair of nodes).

After the transformation, one can compute the impedance (jorb) between nodes 1 and 4 (or between any other pair of nodes):
\[
\vartheta(1,4) = \vartheta(1',4') = (B \eqcirc BA \eqcirc BA) \eqcirc (A \eqcirc B \eqcirc BA)
               = BABAB \eqcirc ABAB = BABABABAB.
\]

The next example (Fig.~\ref{fig:k4}) requires the $\Delta\!\leftrightarrow\!Y$ transformation
(the modular distribution law alone is not sufficient) to obtain a
series--parallel (SP) graph.  
The example is the complete graph \(K_4\), which we may call a Wheatstone bridge with parallel branches.  
It is planar, so such transformations are possible (unlike for \(K_5\) or \(K_{3,3}\)).

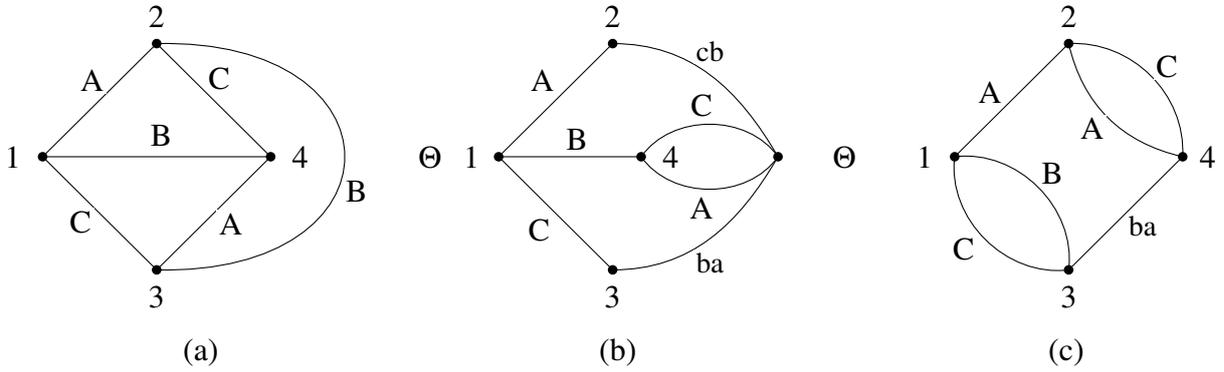
\begin{figure}[ht]
\centering
\begin{tikzpicture}[
  scale=1.5,
  edgelabel/.style={fill=white,inner sep=1pt}
]
\def\H{2.0}      
\def\DX{4.0}     
\def\LBL{3mm}   

\begin{scope}[shift={(0,0)},local bounding box=G1]
  \coordinate (A1) at (0,0);
  \coordinate (A2) at (1, \H/2);
  \coordinate (A3) at (1,-\H/2);
  \coordinate (A4) at (2.0,0);

  \fill (A1) circle (1.2pt); \node[left=4pt of A1] {1};
  \fill (A2) circle (1.2pt); \node at ($(A2)+(0,0.24)$) {2};
  \fill (A3) circle (1.2pt); \node at ($(A3)+(0,-0.24)$) {3};
  \fill (A4) circle (1.2pt); \node at ($(A4)+(0.26,0)$) {4};

  \draw (A1) -- node[edgelabel,above,pos=.55,xshift=-5pt]{A} (A2);
  \draw (A1) -- node[edgelabel,below,pos=.45,xshift=-5pt]{C} (A3);
  \draw (A1) -- node[edgelabel,above,pos=.52, yshift=3pt]{B} (A4);

  \draw (A2) -- node[edgelabel,above,pos=.55,yshift=5pt]{C} (A4);
  \draw (A4) -- node[edgelabel,below,pos=.35,yshift=-5pt]{A} (A3);

  \draw (A2) .. controls ($(A4)+(1.2,1.05)$) and ($(A4)+(1.2,-1.05)$) ..
        node[edgelabel,right,pos=.60,xshift=2pt]{B} (A3);
\end{scope}

\begin{scope}[shift={(\DX,0)},local bounding box=G2]
  \coordinate (B1) at (0,0);
  \coordinate (B2) at (1, \H/2);
  \coordinate (B3) at (1,-\H/2);
  \coordinate (B4) at (1.25,0);
  \coordinate (BR) at (2.45,0);

  \fill (B1) circle (1.2pt); \node[left=3pt of B1] {1};
  \fill (B2) circle (1.2pt); \node at ($(B2)+(0,0.24)$) {2};
  \fill (B3) circle (1.2pt); \node at ($(B3)+(0,-0.24)$) {3};
  \fill (B4) circle (1.2pt); \node at ($(B4)+(0.26,0)$) {4};
  \fill (BR) circle (1.2pt);

  \draw (B1) -- node[edgelabel,above,pos=.55,xshift=-0.25cm]{A} (B2);
  \draw (B1) -- node[edgelabel,below,pos=.52,xshift=-7pt]{C} (B3);
  \draw (B1) -- node[edgelabel,above,pos=.55,yshift=1.5pt]{B} (B4);

  \draw (B4) to[bend left=55]  node[edgelabel,above,pos=.45,yshift=2pt]{C} (BR);
  \draw (B4) to[bend right=55] node[edgelabel,below,pos=.45,yshift=-2pt]{A} (BR);

  \draw (B2) to[out=0,in=120]
        node[edgelabel,above,pos=.50,yshift=4pt]{\small cb} (BR);
  \draw (B3) to[out=0,in=-120]
        node[edgelabel,below,pos=.50,yshift=-4pt]{\small ba} (BR);
\end{scope}

\begin{scope}[shift={(2*\DX,0)},local bounding box=G3]
  \coordinate (C1) at (0,0);
  \coordinate (C2) at (1, \H/2);
  \coordinate (C3) at (1,-\H/2);
  \coordinate (C4) at (2.0,0);

  \fill (C1) circle (1.2pt); \node[left=4pt of C1] {1};
  \fill (C2) circle (1.2pt); \node at ($(C2)+(0,0.24)$) {2};
  \fill (C3) circle (1.2pt); \node at ($(C3)+(0,-0.24)$) {3};
  \fill (C4) circle (1.2pt); \node at ($(C4)+(0.22,0)$) {4};

  \draw (C1) -- node[edgelabel,above,pos=.45,xshift=-6pt]{A} (C2);
  \draw (C3) -- node[edgelabel,below,pos=.65,yshift=-6pt]{\small ba} (C4);

  \draw (C1) to[bend left=52]
        node[edgelabel,left,pos=.38,xshift=17pt]{B} (C3);
  \draw (C1) to[bend right=52]
        node[edgelabel,left,pos=.58,xshift=-7pt]{C} (C3);

  \draw (C2) to[bend left=48]
        node[edgelabel,right,pos=.42,xshift=5pt]{C} (C4);
  \draw (C2) to[bend right=32]
        node[edgelabel,right,pos=.60,xshift=-16pt]{A} (C4);
\end{scope}

\path (G1.east) -- (G2.west) node[midway] (mid12) {};
\path (G2.east) -- (G3.west) node[midway] (mid23) {};
\node at (mid12) {$\Theta$};
\node at (mid23) {$\Theta$};

\node at ([yshift=-\LBL]G1.south) {(a)};
\node at ([yshift=-\LBL]G2.south) {(b)};
\node at ([yshift=-\LBL]G3.south) {(c)};

\end{tikzpicture}
\caption{The transformation of K4}
\label{fig:k4}
\end{figure}

\begin{itemize}
  \item On branch \(2\!-\!3\), apply the \(\Delta\!\leftrightarrow\!Y\) transform,
        i.e., insert an auxiliary node and label the new branches \texttt{cb}
        and \texttt{ba}, whose series combination equals \texttt{bb} (that is, \(B\)).
  \item In Fig.~\ref{fig:k4}(b), by the modular distribution law, obtain the
        parallel branches \texttt{C} and \texttt{A} between node \(4\) and the
        newly created unnamed node.
  \item The branch \(\texttt{CA}\) (the parallel combination of
        \texttt{C} and \texttt{A}) can be ``slid'' into node \(2\);
        the artificially created branch \texttt{cb} then disappears
        (it is absorbed into node \(2\)).
  \item By ``sliding'' branch \texttt{B} \((1\!-\!4)\) across branch
        \texttt{ba}\,, we obtain the final graph (Fig.~\ref{fig:k4}(c))
        in which the impedance (jorb) between any pair of nodes can be computed.
\end{itemize}

After the transformation, the equivalent impedances (jorbs) are:
\begin{flalign*}
& \vartheta(1,4) = (A \eqcirc CA) \pl (ba \eqcirc CB) = ACACB, &&\\
& \vartheta(1,2) = BCACA, \qquad
  \vartheta(1,3) = CBACA, &&\\
& \vartheta(2,3) = BCACA, \qquad
  \vartheta(2,4) = CBACA, \qquad
  \vartheta(3,4) = BCACA, &&\\
& \vartheta(1,1) = ca \pl (A \eqcirc CA \eqcirc ba \eqcirc CB)
                 = CACA = J_s(x), \qquad \lambda_s(x) = 4. &&
\end{flalign*}
\end{example}

\newcommand{\oxB}{%
  \begin{circuitikz}[american, scale=0.75]
    \ctikzset{bipoles/length=12mm}
    \draw[draw=none] (-0.3,0.60) rectangle (2.9,-0.60);
    \draw (0,0) to[short,o-] ++(3mm,0)
          to[R] ++(12mm,0)
          to[short,-o] ++(3mm,0);
  \end{circuitikz}%
}

\newcommand{\oxBC}{
  \begin{circuitikz}[american, scale=0.75]
    \ctikzset{bipoles/length=12mm}
    \draw[draw=none] (-0.3,0.60) rectangle (2.9,-0.60);
    \draw (0,0) to[short,o-] ++(3mm,0)
          to[R]            ++(12mm,0)
          to[short]        ++(2mm,0) coordinate (J) 
          node[ocirc]{}                         
          to[short]        ++(2mm,0)            
          to[L]            ++(12mm,0)
          to[short,-o]     ++(3mm,0);
  \end{circuitikz}%
}

\newcommand{\oxCA}{
  \begin{circuitikz}[american, scale=0.75]
    \ctikzset{bipoles/length=10mm}
    \draw[draw=none] (-0.3,0.80) rectangle (2.9,-0.60);
    \draw (0,0) to[short,o-] (0.2,0) coordinate (L);
    \coordinate (R) at (2.4,0);
    \draw (R) to[short,-o] (2.6,0);
    \draw (L) -- ++(0,0.4) coordinate (Lt)
          (L) -- ++(0,-0.4) coordinate (Lb)
          (R) -- ++(0,0.4) coordinate (Rt)
          (R) -- ++(0,-0.4) coordinate (Rb);
    \draw (Lt) to[L] (Rt);
    \draw (Lb) to[C] (Rb);
  \end{circuitikz}%
}

\newcommand{\oxCB}{
  \begin{circuitikz}[american, scale=0.75]
    \ctikzset{bipoles/length=10mm}
    \draw[draw=none] (-0.3,0.85) rectangle (2.9,-0.60);
    \draw (0,0) to[short,o-] (0.2,0) coordinate (L);
    \coordinate (R) at (2.4,0);
    \draw (R) to[short,-o] (2.6,0);
    \draw (L) -- ++(0,0.5) coordinate (Lt)
          (L) -- ++(0,-0.5) coordinate (Lb)
          (R) -- ++(0,0.5) coordinate (Rt)
          (R) -- ++(0,-0.5) coordinate (Rb);
    \draw (Lt) to[R] (Rt); 
    \draw (Lb) to[L] (Rb); 
  \end{circuitikz}%
}

\newcommand{\oxC}{%
  \begin{circuitikz}[american, scale=0.75]
    \ctikzset{bipoles/length=12mm}
    \draw[draw=none] (-0.3,0.60) rectangle (2.9,-0.60);
    \draw (0,0) to[short,o-] ++(3mm,0)
          to[L] ++(12mm,0)
          to[short,-o] ++(3mm,0);
  \end{circuitikz}%
}




\section{Fundamentals of M‑Logic}

M‑space admits a family of higher‑order operators, most of which can be viewed as \emph{permutations}. Without delving into the full algebraic machinery—such as the $G$‑ and $H$‑operators or the eight‑element permutation groups—one permutations merit particular attention: the \emph{complement} permutation
  \[
     K(x)\;=\;\ell'_x\,x\,r'_x.
  \]
The permutation \(K(\,)\) acts as the \emph{negation operator} in the
M‑number formulation of formal logic.  
Recall that the shell \(q(\,)\) plays the rôle of a logical variable in
M‑logic; hence the entire many‑valued calculus can be developed by
re‑using the algebraic identities already obtained for electrical‑network
jorbs.

Under this interpretation

\begin{itemize}
  \item a \emph{series} composition of jorbs corresponds to logical
        \textsf{AND};
  \item a \emph{parallel} composition corresponds to logical
        \textsf{OR};
\end{itemize}

only, instead of operating on the complete jorb, one works with its
boundary components \(l_x\) and \(r_x\) (and, when convenient, with
their duals).

\medskip
The core logical operators expressed in terms of
M‑shells are summarised in Table~\ref{tab:Mlogic}.

\begin{table}[h]
  \centering
  \caption{Basic logical operators in M‑shell notation}
  \label{tab:Mlogic}
  \begin{tabular}{@{}lll@{}}
    \toprule
    Operator & Symbolic form & M‑shell realisation \\ \midrule
    Conjunction (\textsc{and})      & \(x \land y\) &
      \(x \eqcirc y\) (series) \\
    Disjunction (\textsc{or})       & \(x \lor y\)  &
      \(x \pl y\) (parallel) \\
    Negation (\textsc{not})         & \(\lnot x\)   &
      \(K(x)=\ell'_x \,x\, r'_x\) \\
    Identity (logical variable)     & \(x\)         &
      \(q(x)\) (shell) \\
    Tautology / Contradiction       & \(1,\;0\)     &
      \( s‑zero, p‑zero \)\quad over $\Gamma$ \\ \bottomrule
  \end{tabular}
\end{table}
Thus we obtain concise, formal expressions (Table ~\ref{tab:logic}) that are amenable to machine processing of formalised logic:

\begin{table}[ht]
\centering
\captionsetup{skip=7pt}
\caption{$m$-logical expression}
\label{tab:logic}
\begin{tabular}{l|c|l}
\hline
$operator$        & symbol      & expression for logical operations           \\
\hline
Nnegation      & $\neg$      & $l_x' \cdot r_x'$   \\
Conjunction      & $\wedge$      & ($l_x \downarrow l_y) \cdot (r_x \uparrow r_y $)   \\
Disjunction      & $\vee$      & ($l_x \uparrow l_y) \cdot (r_x \downarrow r_y $)   \\
p-implication       & $\overset{p}{\longrightarrow}$      & ($l_x' \uparrow l_y) \cdot (r_x' \downarrow r_y $)   \\
s-implication      & $\overset{s}{\longrightarrow}$      & ($l_x' \downarrow l_y) \cdot (r_x' \uparrow r_y $)   \\
p-equivalence      & $\overset{p}{\Longleftrightarrow}$ & $((l_x' \uparrow l_y) \downarrow (l_x \uparrow l_y' )) \cdot ((r_x' \downarrow r_y) \uparrow (r_x \downarrow r_y' ))$  
\\
s-equivalence       & $\overset{s}{\Longleftrightarrow}$ & ($(l_x' \downarrow l_y) \uparrow (l_x \downarrow l_y')) \cdot ((r_x' \uparrow r_y) \downarrow (r_x \uparrow r_y' )$) \\
\hline
\end{tabular}
\end{table}

Instead of computing the already mentioned equation (eq. \ref{eq:ser}):
\[x \eqcirc y := (l_x \downarrow l_y)\cdot (l_x \downarrow l_y)\cdot x \cdot (r_x \uparrow l_y)\cdot (r_x \downarrow l_y)\cdot y \cdot (r_x \uparrow r_y)\cdot (r_x \uparrow r_y),
\]

in $m$-logic it is sufficient to calculate
\begin{equation}
x \wedge y := (l_x \downarrow l_y)\cdot (r_x \uparrow r_y),
\end{equation}
because the shell of the expression \(e = x \wedge y\) is \(q(e)=l_e \cdot r_e\).\\
As previously shown, \((aa)\eqcirc (bb) = aaaababbbb\), which represents the variable \(ab\) in $m$-logic, since \(q(aaaababbbb) = ab\).
If one is working in two-valued (Boolean) logic (the \(\Gamma_2\) alphabet), then ‘ba’ represents truth (\(\top\)) and ‘ab’ falsehood (\(\bot\)).
Clearly, it is easier to use the Table ~\ref{tab:logic}:
\[
(aa)\wedge(bb) = (a \downarrow b)\cdot (a \uparrow b) = ab.
\]

Analogously, for the logical OR function (i.e. parallel connection, \(\pl\)):
\[
(aa)\vee(bb) = (a \uparrow b)\cdot (a \downarrow b) = ba.
\]
Note that with the two-letter alphabet $\Gamma_2$ there are only four possible values of $q()$. (In Dunn–Belnap’s four-valued logic $B4$, the pair 'aa' corresponds to $B$ and 'bb' to $N$.) With the three‑letter alphabet $\Gamma_3$ there are nine possible values of $q$, precisely the arity found in the early twentieth-century three-valued
 calculi.\footnote{For the three-valued case, Šare's M‑logic coincides with Kleene's $K_1$; nonetheless, its generalisation reaches well beyond the traditional two- and three-valued frameworks of classical logic.}
 
Just as conjunction and disjunction are dual, there also exist p- and s-implications and equivalences.

The traditional implication (here, p-implication) \(P \rightarrow Q = \neg P \wedge Q\) becomes, in $m$-logic,
\begin{equation}
K(P)\wedge Q = (l_p' \uparrow l_q)\cdot (r_p' \downarrow r_q).
\end{equation}

Figure \ref{hasseab} presents the Hasse diagram of the partial order
$(\le_q)$ on the set of equivalence classes of CRL two‑terminals.
The algebra of residual classes modulo unity is obtained by introducing
two binary operations—series and parallel connection—on the classes.
This algebra forms a \emph{lattice}: the series connection corresponds to
the \emph{supremum}, and the parallel connection to the \emph{infimum}.
The lattice is isomorphic to the cardinal product of anti‑isomorphically
ordered three‑element chains.

Special attention should be paid to the peculiar \emph{logic} that appears in
classifying CRL two‑terminals.  
If a truth function is considered on the shell of a two‑terminal, the logic
is characterised by nine pairs of truth values.  
Accepting the symmetry of dual physical tests leads to a nine‑valued logic
of electrical two‑terminals:
\begin{itemize}
\item $\top$\;: voltage is zero under constant current,
\item $\bot$\;: current is zero under constant voltage,
\item $\mid$\;: neither condition holds.
\end{itemize}

The symbols $\top$, $\bot$, and $\mid$ are read \emph{true}, \emph{false}, and \emph{neither true nor false}, respectively, replicating the practice
used in logic circuits.


\newcommand{\ValFirst}[1]{%
  \ifcase#1\relax\or $\bot$\or $\mid$\or $\top$\fi}

\newcommand{\ValSecond}[1]{%
  \ifcase#1\relax\or $\top$\or $\mid$\or $\bot$\fi}

\newcommand{\PlaceCoords}{%
  \foreach \i in {1,...,3}{%
    \foreach \j in {1,...,3}{%
      \coordinate (p\i\j) at ({\i+\j},{\j-\i});%
    }%
  }}

\newcommand{\DrawEdges}{%
  \foreach \i in {1,2}{%
    \foreach \j in {1,2,3}{%
      \draw[line width=0.9pt] (p\i\j) -- (p\the\numexpr\i+1\relax\j);%
    }}%
  \foreach \i in {1,2,3}{%
    \foreach \j in {1,2}{%
      \draw[line width=0.9pt] (p\i\j) -- (p\i\the\numexpr\j+1\relax);%
    }}}

\begin{figure}[ht]
\centering
\begin{minipage}[t]{0.47\textwidth}
\centering
\begin{tikzpicture}[scale=1, line cap=round, line join=round]
  \PlaceCoords
  \DrawEdges
  \foreach \i in {1,...,3}{%
    \foreach \j in {1,...,3}{%
      \node[font=\small, fill=white, inner sep=1pt]
        at (p\i\j) {[{\Ltr{\i}}{\Ltr{\j}}]};%
    }}%
\end{tikzpicture}

\vspace{0.5ex}{\small (a)}
\end{minipage}
\hfill
\begin{minipage}[t]{0.47\textwidth}
\centering
\begin{tikzpicture}[scale=1, line cap=round, line join=round]
  \PlaceCoords
  \DrawEdges
  \foreach \i in {1,...,3}{%
    \foreach \j in {1,...,3}{%
      \node[font=\small, fill=white, inner sep=1pt]
        at (p\i\j) {[\,\ValFirst{\i},\,\ValSecond{\j}\,]};%
    }}%
\end{tikzpicture}

\vspace{0.5ex}{\small (b)}
\end{minipage}
\caption{Hasse diagram for the three-generator lattice: a) jorb; b) logic.}
\label{hasseab}
\end{figure}
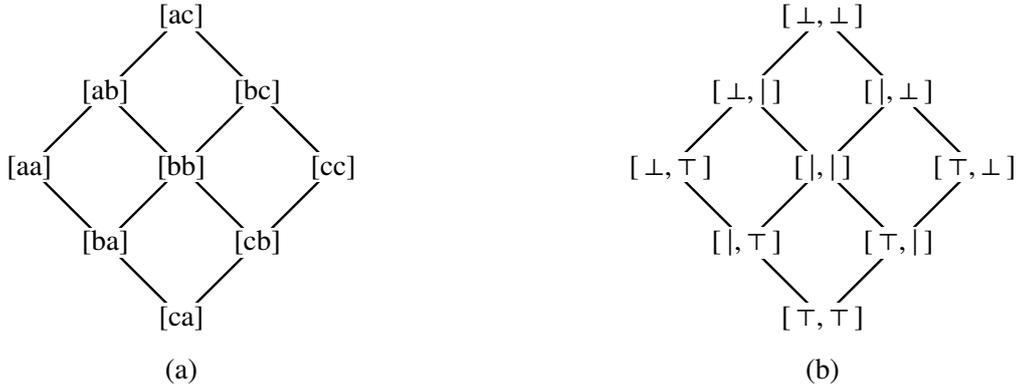

\section{Jorb and the Impedance of Electrical Networks}
The problem of establishing equivalence between jorbs—particularly when calculations are carried out manually—requires the use of the compression or expansion function. For example, while the jorbs CABA, CAbcaa, CAbcbbA, and CBABA are equivalent, this is not trivial to demonstrate. The equivalence problem is addressed by introducing a suitable metric, specified by two parameters: $\lambda_s$ and $\lambda_p$.

These parameters reveal a fundamental relationship between the valuation of a jorb shell - the defect $\Delta(x)= v(r_x) - v(l_x)$ from (eq. \ref{eq:def}), and the jorb’s $\lambda$-length (eq. \ref{lam}), as expressed by the following formulas:
\begin{equation}
\lambda_s(x) = \frac{\lambda(x) - \Delta(x)}{2}, \qquad
\lambda_p(x) = \frac{\lambda(x) + \Delta(x)}{2}
\label{eq: lsp}
\end{equation}
It is easy to see that:
\begin{equation}
\lambda_s(x) - \lambda_p(x) = v(l_x) - v(r_x)
\label{eq: lssp}
\end{equation}

The valuation of a jorb, for any alphabet \(\Gamma\), is a function \(\Phi_\Gamma\)\footnote{The condition is that the valuation of the alphabet is symmetric with respect to the integer 0 (zero).} from the set of jorbs \(\mathbf{M}_\Gamma\) to the m-quadruple, whose components represent the valuation of the start jorb's symbol, s-length, p-length, and valuation of the end jorb's symbol, respectively.
\begin{equation}
\Phi_\Gamma(x) = (v(l_x),\; \lambda_s,\; \lambda_p, \; v(r_x))
\end{equation}

\begin{example}\label{ex:jorb}
\leavevmode
\begin{enumerate}
    \item For \(\Gamma = \{a, b\}, \quad \;\;\; \Phi_\Gamma(BABA) = (1, 4, 2, -1) \)
   \\ For \(\Gamma = \{a, b, c\}, \quad \Phi_\Gamma(BABA)=(0,2,1,-1)
    \)

  \item For \(\Gamma = \{a, b, c\}, \\\Phi_\Gamma(CABA)=(1,3,1,-1)\)\\
  \(\Phi_\Gamma(CAbcaa)=(1,3,1,-1)\)\\
  \(\Phi_\Gamma(CAbcbbaa)=(1,3,1,-1)\).
\end{enumerate}

\end{example}

By knowing any three of the four components in the quadruple—since the fourth can always be determined from the other three—the jorb is uniquely defined by its triplet. However, for better clarity of the jorb characteristics, we will retain the quadruple form. Jorbs are equivalent if they have the same quadruples (or triplets) for the same alphabet.

A particularly important relationship of the m-quadruple is with the impedance Z(s) of the electrical network:

For every \(x \in M_\Gamma\),   
\begin{equation}
(v(l_x), \lambda_s, \lambda_p, v(r_x)) \iff Z(s) = 
s^{v(l_x)}\,
\dfrac{\displaystyle\sum_{i=0}^{\lambda_s} A_i s^{i}}
      {\displaystyle\sum_{j=0}^{\lambda_p} B_j s^{j}}  \label{rik2}
\end{equation}
where $A_i$ and $B_j$ are the coefficients of the polynomials in the numerator and denominator, respectively, of the electrical impedance in the complex s-plane.

Within the rational function representation of electrical impedance, $\lambda_s$ and $\lambda_p$ denote the maximal degrees of the numerator and denominator polynomials, respectively.

This condition constitutes the fundamental requirement of the algorithm for generating jorbs and for constructing their classes with a common q-shell, as well as for the synthesis of electrical networks from jorbs.
\begin{example}\label{ex:triplets}
\leavevmode
The $m$-numbers from Figure \ref{fig:goran}  for $\Gamma=\{a, b, c\}$ and $val = \{-1, 0, 1\}$ have the following triplets: ABAB = (-1, 1, 0), CBCB = (1, 2, 0), BABA = (0, 2, -1), BCBC = (0, 1, 1), BACB = (0, 2, 0), BCAB = (0, 2, 0) .
\end{example}
\section{Classification and Categorization of Electrical Networks}

In the context of electrical network analysis, the terms \textit{classification} and \textit{categorization} denote related yet conceptually distinct processes. \textit{Classification} refers to the systematic organization of networks according to well-defined analytical or topological criteria, typically resulting in a hierarchical structure-for example, according to their topology. 

On the other hand, \textit{categorization} involves grouping networks based on shared or similar properties, often without imposing a strict hierarchical framework. This process emphasizes the recognition of structural or functional patterns—for example, grouping networks that share identical $q$-shells, equivalent impedance forms, or symmetry properties. 

Thus, while classification establishes a formal taxonomy grounded in predefined rules, categorization highlights the relational and often emergent similarities among networks within a broader analytical framework.

\subsection{Jorbs Categorization Based on Landenheim's Catalogue}
The first systematic attempt to study electrical networks by means of exhaustive enumeration was undertaken in E. Ladenheim’s Master’s thesis \cite{ladenheim1948synthesis} in 1948, under the supervision of Ronald Martin Foster. Alongside Wilhelm Adolf E. Cauer, Foster is regarded as a founding figure of electrical network synthesis (1924). 

The \textit{Ladenheim catalogue} comprises the complete set of electrical networks containing at most two energy storage elements (inductors or capacitors) and no more than three resistors.
Although earlier attempts at cataloguing had been made (\cite{Jiang2011}, \cite{JiangSmith2012}), the most comprehensive work in this regard was carried out by A. Morelli in his doctoral dissertation \cite{phdthesis}.

\noindent
\begin{figure}[h]
\begin{minipage}[t]{0.48\textwidth}
\begin{tabular}{|r|l|p{3cm}|}
\hline
\rowcolor{gray!25}\textbf{No.} & \textbf{Jorb} & \textbf{Scheme label (\#)} \\
\hline
1  & A   & 2 \\
2  & B   & 3 \\
3  & C   & 1 \\
\hline
\rowcolor{gray!20} $\Sigma_1$ & 3 & 3 \\
\hline
4  & AB  & 5 \\
5  & AC  & 4 \\
6  & BA  & 9 \\
7  & BC  & 6 \\
8  & CA  & 7 \\
9  & CB  & 8 \\
\hline
\rowcolor{gray!20} $\Sigma_2$ & 6 & 6 \\
\hline
10 & ABA & 13, 14 \\
11 & ABC & 10 \\
12 & ACB & 26 \\
13 & BAB & 17, 18 \\
14 & BAC & 34 \\
15 & BCA & 49 \\
16 & BCB & 15, 16 \\
17 & CAB & 41 \\
18 & CBA & 19 \\
19 & CBC & 11, 12 \\
\hline
\rowcolor{gray!20} $\Sigma_3$ & 10 & 14 \\
\hline
\end{tabular}
\end{minipage}%
\hspace{.01cm}%
\begin{minipage}[t]{0.48\textwidth}
\begin{tabular}{|r|l|p{5cm}|}
\hline
\rowcolor{gray!25}\textbf{No.} & \textbf{Jorb} & \textbf{Scheme label (\#)} \\
\hline
20 & ABAB  & 20, 21, 22, 23 \\
21 & ABCB  & 24, 25 \\
22 & BABA  & 43, 44, 45, 46 \\
23 & BABC  & 32, 33 \\
24 & BACB  & 42, 62, 71, 72, 88, 96, 97\\
25 & BCAB  & 27, 63, 73, 74, 87 \\
26 & BCBA  & 47, 48 \\
27 & BCBC  & 28, 29, 30, 31 \\
28 & CBAB  & 39, 40 \\
29 & CBCB  & 35, 36, 37, 38 \\
\hline
\rowcolor{gray!20} $\Sigma_4$ & 10 & 36 \\
\hline
30 & BABAB & 75, 76, 77, 78, 79, 80, 81, 82, 83, 84, *85, *86 \\
31 & BABCB & 64, 65, 66, 89, 90, 91, 98, 100, 103, 106 \\
32 & BCBAB & 67, 68, 69, *70, 92, 93, 94, *95, 99, 101, 102, 104, *105, *107, *108 \\
33 & BCBCB & 50, 51, 52, 53, 54, 55, 56, 57, 58, 59, *60, *61 \\
\hline
\rowcolor{gray!20} $\Sigma_5$ & 4 & 49 \\
\hline
\rowcolor{gray!35}{\Large $\Sigma$} & \textbf{33} & \textbf{108} \\
\hline
\end{tabular}
\end{minipage}
\caption{Jorb's classes of Ladenheim's networks}
\label{fig:JorbClass}
\end{figure}
Morelli's novelty was that he cleaned and corrected Ladenheim's raw list, introduced group-theoretic classification ($s$: frequency inversion, $d$:impedance inversion, $p=s \cdot d$ composition), and then went further by grouping topologies into equivalence classes of network functions — ultimately showing that only 35 essentially distinct types remain out of the original 148, i.e. $148 \rightarrow 108 \rightarrow 62 \rightarrow 35$.
 
The table shown in Figure \ref{fig:JorbClass} was obtained by applying the $m$-theory to Morelli’s list of electrical networks provided in Appendices A and B \cite[see p.~175]{phdthesis}. For each of the 108  schematics, the corresponding jorb was calculated and then categorized according to the number of elements\footnote{Although $m$-theory is defined for alphabets with an arbitrary number of symbols, this work will focus exclusively on the alphabet $\Gamma_3$, in order to preserve a clear physical interpretation. This alphabet corresponds to the three electrical elements C, R, and L, which are associated with the letters A, B, and C from the abstract alphabet $\{a, b, c\}$, with valuations $\{-1, 0, 1\}$.}. For each such class, the corresponding table lists both the total number of jorbs belonging to the class and the number of original schemes associated with it. The third column lists the networks originally labeled '$\#n$' and indicates which Ladenheim/Morelli networks share the same jorb.

The scheme numbers in Ladenheim's catalog that represent bridge connections in the table (fig.  \ref{fig:JorbClass}) are marked with an asterisk (`*`) in front of the number. For example, the jorb $BABAB$ has SP schemes \#75, \#76, \ldots, \#84  and the bridge schemes \#85 (*85) and \#86 (*86)\footnote{It can be shown, by means of the $\Delta$–$Y$ topological transformation (fig. \ref{fig:delta}), that the bridge network, e.g., $*60$ becomes the SP network $\#59$, and the network $*85$ becomes $\#83$. With similar interventions, Morelli's classification can be partially modified.
}.  

Finally, the table reports the total number of distinct jorbs obtained, together with the number of original schemes from the Morelli catalog.

It should be noted that the table illustrates only the correspondence between the Ladenheim/Morelli schematics and the jorb-based representation, without asserting its completeness. For instance, the jorbs \texttt{BACB} and \texttt{BCAB} describe equivalent jorbs; consequently, in the final classification the corresponding schemes are consolidated under a single representative quadruple, thereby reducing the total number of distinct classes. Moreover, within this revised framework, it becomes unnecessary to impose explicit constraints on the number of reactive components or on the total element count.

\subsection{Computational Generation of Jorbs}
Jorbs represent \emph{non-repetitive strings} (two identical uppercase letters cannot appear consecutively, since the \texttt{zip} function reduces them to a single one). Their computational generation involves the initial and final letter, as well as the total length — i.e., the number of uppercase letters. For an alphabet of size $k$ (e.g., for $\Gamma_3$ we have $k = 3$) and a string of length $n$, the number of possible strings without consecutive identical uppercase letters is given by:
\begin{equation}
N(n) = k \cdot (k - 1)^{n - 1}
\end{equation}

For $k = 3$, this means that if $n = 3$ we have $12$ generated jorbs, and for $n = 4$ we have $24$ generated jorbs. In Table \ref{tab:gliste-koliko}, the obtained jorbs are shown together with their quadruples.

\begin{longtable}{|c|c|c|c|}
\caption{Jorbs generated for $k = 3$, $n = 3$ and $n = 4$} \label{tab:gliste-koliko} \\
\hline
  & A & B & C \\
\hline
\endfirsthead
\hline
  & A & B & C \\
\hline
\endhead
A & \begin{tabular}[t]{@{}l@{}}ABA, (-1, 1, 1, -1) \\ ACA, (-1, 2, 2, -1)\end{tabular} & \begin{tabular}[t]{@{}l@{}}ACB, (-1, 1, 2, 0)\end{tabular} & \begin{tabular}[t]{@{}l@{}}ABC, (-1, 0, 2, 1)\end{tabular} \\
\hline
B & \begin{tabular}[t]{@{}l@{}}BCA, (0, 2, 1, -1)\end{tabular} & \begin{tabular}[t]{@{}l@{}}BAB, (0, 1, 1, 0) \\ BCB, (0, 1, 1, 0)\end{tabular} & \begin{tabular}[t]{@{}l@{}}BAC, (0, 1, 2, 1)\end{tabular} \\
\hline
C & \begin{tabular}[t]{@{}l@{}}CBA, (1, 2, 0, -1)\end{tabular} & \begin{tabular}[t]{@{}l@{}}CAB, (1, 2, 1, 0)\end{tabular} & \begin{tabular}[t]{@{}l@{}}CAC, (1, 2, 2, 1) \\ CBC, (1, 1, 1, 1)\end{tabular} \\
\hline
\end{longtable}

\begin{longtable}{|c|c|c|c|}
\hline
  & A & B & C \\
\hline
\endfirsthead

\hline
  & A & B & C \\
\hline
\endhead

\hline
\endfoot

\hline
\endlastfoot
A & \begin{tabular}[t]{@{}l@{}}ABCA, (-1, 2, 2, -1) \\ ACBA, (-1, 2, 2, -1)\end{tabular} & \begin{tabular}[t]{@{}l@{}}ABAB, (-1, 1, 2, 0) \\ ABCB, (-1, 1, 2, 0) \\ ACAB, (-1, 2, 3, 0)\end{tabular} & \begin{tabular}[t]{@{}l@{}}ABAC, (-1, 1, 3, 1) \\ ACAC, (-1, 2, 4, 1) \\ ACBC, (-1, 1, 3, 1)\end{tabular} \\
\hline
B & \begin{tabular}[t]{@{}l@{}}BABA, (0, 2, 1, -1) \\ BACA, (0, 3, 2, -1) \\ BCBA, (0, 2, 1, -1)\end{tabular} & \begin{tabular}[t]{@{}l@{}}BACB, (0, 2, 2, 0) \\ BCAB, (0, 2, 2, 0)\end{tabular} & \begin{tabular}[t]{@{}l@{}}BABC, (0, 1, 2, 1) \\ BCAC, (0, 2, 3, 1) \\ BCBC, (0, 1, 2, 1)\end{tabular} \\
\hline
C & \begin{tabular}[t]{@{}l@{}}CABA, (1, 3, 1, -1) \\ CACA, (1, 4, 2, -1) \\ CBCA, (1, 3, 1, -1)\end{tabular} & \begin{tabular}[t]{@{}l@{}}CACB, (1, 3, 2, 0) \\ CBAB, (1, 2, 1, 0) \\ CBCB, (1, 2, 1, 0)\end{tabular} & \begin{tabular}[t]{@{}l@{}}CABC, (1, 2, 2, 1) \\ CBAC, (1, 2, 2, 1)\end{tabular} \\
\hline
\end{longtable}
\addtocounter{table}{-1}

Note that the jorbs \texttt{ACA} and \texttt{CAC} for $n = 3$, as well as thirteen jorbs from the table for $n = 4$ (\texttt{ABCA}, \texttt{ACBA}, \texttt{ACAB}, \ldots, \texttt{CBAC}), do not appear in Fig.~\ref{fig:JorbClass}. The reason is that each of them contains more than two reactive elements (capacitor~A and/or inductor~C), which was one of the constraints of Ladenheim’s networks.

The limitation of the classification for $n = 5$ and higher is evident from only four (out of six) jorbs found in Fig.~\ref{fig:JorbClass}, specifically in the B--B cell of the overall matrix containing 48 elements. Table ~\ref{tab:cell-BB} shows this cell, for which, due to a known reason, the networks with the jorbs \texttt{BACAB} and \texttt{BCACB} were not processed.

\begin{table}[h!]
\centering
\caption{Elements from cell B–-B (for $n = 5$)}
\label{tab:cell-BB}
\begin{tabular}{|l|c|}
\hline
\textbf{Jorb} & \textbf{Quadruple} \\
\hline
BABAB & (0, 2, 2, 0) \\
BABCB & (0, 2, 2, 0) \\
BACAB & (0, 3, 3, 0) \\
BCACB & (0, 3, 3, 0) \\
BCBAB & (0, 2, 2, 0) \\
BCBCB & (0, 2, 2, 0) \\
\hline
\end{tabular}
\end{table}

Much has been done recently ( \cite{morelli2019passive}, \cite{9027922}, \cite{8967022}, \cite{JiangSmith2012}). However, there is no generalized theory that covers the treated cases. Therefore, the perspective with jorbs gives new hope.

\subsection{Equivalence of Impedances}

The possible impedance equivalence under the $m$-theory formalism is particularly transparent: two networks can be impedance-equivalent if, and only if, their quadruples coincide.
 This equivalence can be exemplified by networks $\#47$ and $\#48$, as shown in equations~(\ref{mat:47-48}). In this case, the \texttt{BCBA} jorb, i.e. (0, 2, 1, -1) quadruple, establishes the possible equivalence between the networks. The equivalence arises from distinct topologies producing identical $Z(s)$, provided that the RLC component values satisfy the substitution relations given in equation~(\ref{eq:eqparams-47-48}), transforming $Z_{47}(s)$ into $Z_{48}(s)$.

\begin{equation}
\begin{tabular}{|>{\centering\arraybackslash}m{0.33\linewidth}|>{\centering\arraybackslash}m{0.48\linewidth}|}
\hline
\begin{minipage}[t]{\linewidth}
  \centering
  \includegraphics[width=0.6\linewidth]{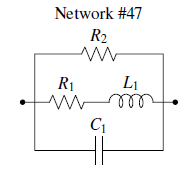}\\[4pt]
  $Z_{47} = (R_2 \; p \; (R_1 \; s \; L_1) \; p \; C_1)$
\end{minipage}
&
$\displaystyle
Z_{47}(s)= \frac{R_1 R_2 + s\,R_1 L_1}
{(R_1+R_2)\;+\;s\,(L_1 + C_1 R_1 R_2)\;+\;s^{2}\,C_1 R_1 L_1}
$
\\ \hline
\begin{minipage}[t]{\linewidth}
  \centering
  \includegraphics[width=0.6\linewidth]{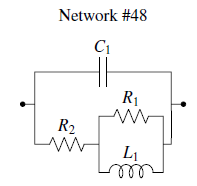}\\[4pt]
  $Z_{48} = (C_1 \; p \; (R_2 \; s \; (R_1 \; p \; L_1)))$
\end{minipage}
&
$\displaystyle
Z_{48}(s)= \frac{R_1 R_2 + s\,L_1\,(R_1+R_2)}
{R_1 \;+\; s\,(L_1 + C_1 R_1 R_2)\;+\; s^{2}\,C_1 L_1 (R_1+R_2)}
$
\\ \hline
\end{tabular}
\label{mat:47-48}
\end{equation}
\noindent
Equivalent parameters ensuring $Z_{48}'(s) \equiv Z_{47}(s)$ as \textbf{BCBA} jorb are:
\begin{equation}
\boxed{
R_1'=\frac{R_1^{2}}{R_1+R_2},\quad
R_2'=\frac{R_1R_2}{R_1+R_2},\quad
L_1'=\frac{L_1\,R_1^{2}}{(R_1+R_2)^{2}},\quad
C_1'=C_1.}
\label{eq:eqparams-47-48}
\end{equation}

An analogous equivalence holds for schemes $\#39$ and $\#40$ (see No. 28 in Fig. \ref{fig:JorbClass}), which share the same jorb \texttt{CBAB}. Conversely, any attempt to equate impedances corresponding to different jorbs—for example, networks $\#47$ and $\#39$—is inherently inconsistent, since the jorb encodes the underlying structural composition of the network. This distinction is, for instance, reflected in the degree of the numerator polynomial of $Z(s)$\footnote{It can be shown that schemes $\#39$ and $\#40$ possess numerators of $Z(s)$ of degree 2, whereas schemes $\#47$ and $\#48$ exhibit numerators of degree 1, precluding their impedance equivalence.}.

In determining network equivalence through the \( m \)-theory, one can discern the behavioral theory of Jan Willems \cite{Willems1991Behavioral}.  
As an example, consider two dual JORBs, \( BAB \) and \( BCB \), each having two topological realizations (TR):

\[
\begin{aligned}
BAB &: 
\begin{cases}
\mathrm{TR}_{11} = \big((C_1 \, s \, R_1) \, p \, R_2\big), \\
\mathrm{TR}_{12} = \big((C_1 \, p \, R_1) \, s \, R_2\big),
\end{cases} \\[6pt]
BCB &: 
\begin{cases}
\mathrm{TR}_{21} = \big((L_1 \, s \, R_1) \, p \, R_2\big), \\
\mathrm{TR}_{22} = \big((L_1 \, p \, R_1) \, s \, R_2\big).
\end{cases}
\end{aligned}
\]

Although identical in the structure of \( Z(s) \), attempts to obtain coefficient values that would demonstrate impedance equivalence—for example, between the first realizations 
\( \mathrm{TR}_{11} \) and \( \mathrm{TR}_{21} \)—prove unsuccessful.  
However, if one of them is taken in the reciprocal form \( 1 / Z(s) \), equivalence is achieved.  
This implies that the impedance \( Z(s) \) and the admittance \( 1 / Z(s) \) represent the same behavior, that is, the same - \emph{imitance}.

We therefore regard the jorb-based representation of impedances as a novel analytical perspective in network analysis and synthesis. Given the structural analogy among various physical domains—electrical, mechanical, fluidic, and thermodynamic—we anticipate that this representation will also find applications beyond electrical network theory.\footnote{In particular, the emergence of the inerter (\cite{annurev}, \cite{MA2021112655}) has reinforced the cross-disciplinary connections of one-port network theory with other branches of engineering.}

\section{Synthesis of SP Electrical Networks}
The derivation of electrical circuits of different topologies from a transfer function or impedance \(Z(s)\) has been a long-standing problem for almost a century. So far, no algorithmic approach exists that provides a complete classification of all topological equivalents, and the question of an optimal algorithm for minimal topological realizations (with the smallest number of elements) still remains open (see \cite{Youla2016} for related discussions).

\subsection{Network Synthesis in Cauer–Foster Forms}

The synthesis of electrical networks is traditionally carried out in the
four canonical Cauer/Foster forms (see \cite{ChenSmith2009} and \cite{JiangSmith2012}).  Each form has its dual:
\emph{Cauer I} versus \emph{Cauer II}, and \emph{Foster I} versus
\emph{Foster II}.  When the ladder is written in jorb notation,
the synthesis becomes straightforward to implement, as illustrated below
for Cauer I and Cauer II.  Note, however, that this procedure produces a
\emph{single}—rather than an exhaustive—realisation with respect to the
network topology.

\begin{table}[h]
  \centering
  \caption{Canonical forms and their ladder notation}
  \label{tab:canonical-ladders}
  \begin{tabular}{@{\!\!}ll@{}}
    \toprule
    \textbf{Form} & \textbf{Ladder notation} \\ \midrule
    Cauer I   & $Z_{1}s\bigl(Y_{2}p\bigl(Z_{3}s\bigl(Y_{4}p(\,\dots)\bigr)\bigr)\bigr)$ \\
    Cauer II  & $Y_{1}p\bigl(Z_{2}s\bigl(Y_{3}p\bigl(Z_{4}s(\,\dots)\bigr)\bigr)\bigr)$ \\
    Foster I  & $p\bigl(R_{0},\,RL_{1},\,RC_{2},\,RLC_{3},\,\dots\bigr)$ \\
    Foster II & $R_{\infty}s\bigl(CR_{1}s\bigl(LR_{2}s\bigl(LCR_{3}s(\,\dots)\bigr)\bigr)\bigr)$ \\
    \bottomrule
  \end{tabular}
\end{table}
In Table \ref{tab:canonical-ladders} \(Z_x\) denotes an impedance, \(Y_x\) an admittance, the symbols \(s\) and \(p\) indicate series and parallel connections, respectively, and the circuit elements \(R\), \(L\), and \(C\) appear with their corresponding indices.

For the jorb $BABCB$, the \textit{Cauer I} ladder can be read off directly from Table~\ref{tab:canonical-ladders}\footnote{The asterisk ‘\(*\)’ is used to denote an admittance, since the superscript \(()^{-1}\) would impair readability.
}.
 
\[ \boxed{Z_{\text{BABCB}}^{(\mathrm{C\,I})}(s)=\ \text{\texttt{R1 s(C1* p(R2 s(L1* p(R3))))}}} \]
where:
\[
\begin{array}{c|c|c|c}
\text{Branch}&\text{Type}&\text{Label}&\text{Math.\ form}\\\hline
1&\text{series}&Z_1&R_1\\
2&\text{paralell}&Y_2&C_1s\\
3&\text{series}&Z_3&R_2\\
4&\text{paralell}&Y_4&\dfrac1{L_1s}\\
5&\text{series}&Z_5&R_3\\
\end{array}
\]
which gives:
\[ Z(s) = Z_1 + \dfrac{1}{Y_2 + \dfrac{1}{Z_3 + \dfrac{1}{Y_4 + \dfrac{1}{Z_5}}}} \]

Similarly, for the jorb $BABCB$ the \textit{Cauer II} ladder can be read off directly from Table~\ref{tab:canonical-ladders}.

\[ \boxed{Z_{\text{BABCB}}^{(\mathrm{C\,II})}(s)=\ \text{\texttt{R1* s(C1 p(R2* s(L1 p(R3*))))}}} \]

\begin{example}
In this illustrative example, adapted from \cite{Sare2000} (p.~148), 
we demonstrate, by means of Table~\ref{tab:canonical-ladders}, 
the derivation of all four Cauer/Foster canonical forms for the network 
represented by the jorb~\texttt{ABABAB}\footnote{Notice that $AB$ represents the series connection of a capacitor and a resistor, while $BA$ represents their parallel connection.}.

\begin{center}
\renewcommand{\arraystretch}{1.2} 
\begin{tabular}{@{}ll@{}}
\toprule
\textbf{Form}      & \textbf{Expression} \\ \midrule
Cauer I            & $ABABAB = B\,s\,(A\,p\,(B\,s\,(A\,p\,(B\,s\,A))))$ \\
Cauer II           & $ABABAB = A\,s\,(B\,p\,(A\,s\,(B\,p\,(A\,s\,B))))$ \\
Foster I           & $ABABAB = A\,s\,BA\,s\,BA\,s\,B$ \\
Foster II          & $ABABAB = AB\,p\,AB\,p\,AB$ \\ \bottomrule
\end{tabular}
\end{center}
\end{example}

\subsection{Synthesising SP Topological Configurations}

The other approach is to write a program that generates all topological trees of the specified elements, uses the permutations with ‘s’ (for series) and ‘p’(for parallel) connections as operators, and—after jorb reduction—checks whether the required canonical jorb form with the same quadruple (triplet) is obtained.

For an arbitrary number of components $i,j,k$ with total $ N=i+j+k$, total number of series--parallel networks is:
\[
T_{SP} = \frac{N!}{i!\,j!\,k!}\, C_{N-1}\, 2^{N-1}
\]
where 
$C_{N-1} = \dfrac{1}{N}\binom{2(N-1)}{\,N-1\,}$ is the \textit{Catalan number}. 

Although that number $T_{SP}$ — even for smaller configurations, e.g., two capacitors, three resistors, and one inductor — it becomes 80640, the computer program employs a heuristic algorithm that speeds up the process.

The verification of this program was carried out using a manually obtained solution from \cite{Sare2000}, p.~166, for ABABA jorb, where each circuit was drawn explicitly instead of being represented in the jorb notation.

\begin{multicols}{2}
\begin{enumerate}
\item(((A s B) p A p B) s A)
\item(((A p B) s A s B) p A)
\item((A s B) p (A s B) p A)
\item((A p B) s (A p B) s A)
\item((((A s B) p A) s B) p A)
\item((((A p B) s A) p B) s A)
\item((((A s B) p B) s A) p A)
\item((((A p B) s B) p A) s A)
\item(((A s B) p A) s (A p B))
\item(((A p B) s A) p (A s B))
\end{enumerate}
\end{multicols}

Table \ref{tbl:jorbs} shows the synthesis of the topological realizations for the two jorbs, BABAB and BCBCB, from Table \ref{tab:cell-BB}, corresponding to the Ladenheim networks in Figure \ref{fig:JorbClass}, rows~30 and~33.

\begin{table}[h!]
\centering
\caption{Example of topological realizations.}
\label{tbl:jorbs}
\begin{tabular}{c|l|l}
\hline
\textbf{No.} & \textbf{BABAB} & \textbf{BCBCB} \\
\hline
1 & ((((C1 s R1) p R2) s C2) p R3) & ((((R1 s L1) p R2) s L2) p R3) \\
2 & (((C1 s R1) p C2 p R2) s R3) & (((R1 s L1) p R2 p L2) s R3) \\
3 & (((C1 p R1) s C2 s R2) p R3) & (((R1 p L1) s R2 s L2) p R3) \\
4 & ((((C1 p R1) s R2) p C2) s R3) & ((((R1 p L1) s R2) p L2) s R3) \\
5 & ((C1 s R1) p (C2 s R2) p R3) & ((R1 s L1) p (R2 s L2) p R3) \\
6 & ((C1 p R1) s (C2 p R2) s R3) & ((R1 p L1) s (R2 p L2) s R3) \\
7 & ((((C1 s R1) p C2) s R2) p R3) & ((((R1 s L1) p L2) s R2) p R3) \\
8 & ((((C1 p R1) s C2) p R2) s R3) & ((((R1 p L1) s L2) p R2) s R3) \\
9 & (((C1 p R1) s R2) p (C2 s R3)) & (((R1 p L1) s R2) p (R3 s L2)) \\
10 & (((C1 s R1) p R2) s (C2 p R3)) & (((R1 s L1) p R2) s (R3 p L2)) \\
\hline
\end{tabular}
\end{table}

The program demonstrates the near-complete capability for synthesizing SP topologies of electrical one-port networks, thereby revealing the operational and reactive dualities as well as the mirror symmetries embedded in the jorb representation, and thus providing a foundation for further generalization and refinement of the synthesis approach.


\section{Conclusion}

In this work, we presented a unified \(m\)-theoretic framework for CRL one-port networks that integrates algebra (jorbs, shells/cores, quasi-orders), topology (\(m\)-graphs with the theta map \(\vartheta\), modular distributivity, and the \(\Delta\text{--}Y\) transformation), and logic (series–parallel induced operations). The \(\lambda\text{--}\Delta\) metric defines \(\lambda_s\) and \(\lambda_p\), and, through the valuation morphism \(\Phi\), provides a compact descriptor of the impedance degree structure. The theoretical formulation is supported by constructive methods, including computational generation of jorbs and synthesis workflows for Cauer and Foster ladder forms, as well as exhaustive and heuristic generation of series–parallel topologies. The results demonstrate practical potential for CAD tools, automated model reduction, and formal verification of network equivalence.
\footnote{AI assistants—ChatGPT \cite{openai_chatgpt} and Grok \cite{xai_grok}—were used mainly for programming support and translations. All outputs were reviewed, adapted, and verified by the authors; all conceptual work, derivations, and final interpretations are entirely the authors’ own.}

\section{Future work} 
Among the first tasks currently being carried out are the catalogs of electrical one-port networks, which will be accessible and generated online thanks to the programs developed for this article.

It is evident, even from our tables (fig. \ref{fig:JorbClass}), that the catalog is not complete even under the already imposed constraints (e.g., for jorb classes with 5 elements), and the mixing of SP and bridge networks is also questionable. An attempt will also be made to categorize the networks under improved conditions.
Furthermore, the core of the jorb ($J_s$), which ensures invariance with respect to impedances observed from any two nodes of the network, represents a particular challenge. 

The challenges of applying $m$-numbers in electrical engineering, as well as in related technical fields (mechanics, hydromechanics, fluidics), are of primary importance and are already being pursued. In addition, the extension of the alphabet $\Gamma$ from three to at least three additional symbols is also being investigated, i.e., physical elements for which the $m$-atoms would be: \textbf{D} (dd): \textit{memristor} $M$ with $v = M(q)\, i$ (state-dependent $R$); \textbf{E} (ee): \textit{CPE / Warburg} with $Z = K/(j\omega)^\alpha$ ($0<\alpha<1$); and \textbf{F} (ff): \textit{fractional inductor} with $Z \propto (j\omega)^{+\alpha}$.

\printbibliography[title={References}]

@article{MA2021112655,
title = {Inerter-based structural vibration control: A state-of-the-art review},
journal = {Engineering Structures},
volume = {243},
pages = {112655},
year = {2021},
issn = {0141-0296},
doi = {https://doi.org/10.1016/j.engstruct.2021.112655},
url = {https://www.sciencedirect.com/science/article/pii/S0141029621008051},
author = {Ruisheng Ma and Kaiming Bi and Hong Hao},
keywords = {Inerter, Passive vibration control, Review},
abstract = {Structural vibration control has received considerable research attentions in the past few decades, with special emphasis on developing effective, affordable and applicable control systems to protect the structures against natural or man-made hazards. In 2002, a two-terminal mechanical element, dubbed inerter, was proposed based on the force-current analogy, offering many potentials for upgrading conventional structural vibration control systems. Over the past two decades especially in the recent five years, extensive research efforts have been made for the development of inerter-based vibration control systems. This paper aims to provide a state-of-the-art review on the research and development of inerter-based passive vibration control systems and their applications. It begins with the concept and physical realizations of inerter. The mechanical models of inerter are then presented. Subsequently, this paper reviews the applications of inerter in civil engineering and discusses its benefits comparing to conventional vibration control systems. Finally, some discussions are made on the unresolved problems and the possible topics for future studies.}
}

@article{Willems1991Behavioral,
  author    = {Jan C. Willems},
  title     = {Paradigms and Puzzles in the Theory of Dynamical Systems},
  journal   = {IEEE Transactions on Automatic Control},
  year      = {1991},
  volume    = {36},
  number    = {3},
  pages     = {259--294},
  doi       = {10.1109/9.73565},
  url       = {https://doi.org/10.1109/9.73565}
}

@article{annurev,
   author = "Smith, Malcolm C.",
   title = "The Inerter: A Retrospective", 
   journal= "Annual Review of Control, Robotics, and Autonomous Systems",
   year = "2020",
   volume = "3",
   number = "Volume 3, 2020",
   pages = "361-391",
   doi = "https://doi.org/10.1146/annurev-control-053018-023917",
   url = "https://www.annualreviews.org/content/journals/10.1146/annurev-control-053018-023917",
   publisher = "Annual Reviews",
   issn = "2573-5144",
   type = "Journal Article",
   keywords = "vibrations",
   keywords = "passivity",
   keywords = "mechanical networks",
   keywords = "control systems",
   abstract = "This article provides an introduction and overview of the inerter concept and device. Careful attention is given to the distinction between the inerter as an ideal modeling element and devices that approximate the ideal behavior. The background is given to the formal definition of the inerter as a mechanical one-port with terminal forces proportional to the relative acceleration between them. Four major methods of construction are described and modeled. The discussion focuses particularly on the notion of terminals, the distinction between a device and an effect, sign reversals, back driving in geared systems, the conceptual aspects of the modeling step for inerter embodiments, and the problem of reverse engineering to discover a purpose. The article includes an analysis and discussion of the rotational inerter, a brief review of the ideas of passive network synthesis that led to the inerter concept, and an analysis and discussion of several examples of integrated mechanical devices. It concludes with an imaginary dialogue between the author and an interlocutor on the understanding and purpose of the inerter.",
  }

@ARTICLE{9027922,
  author={Wang, Kai and Chen, Michael Z. Q.},
  journal={IEEE Transactions on Automatic Control}, 
  title={On Realizability of Specific Biquadratic Impedances as Three-Reactive Seven-Element Series-Parallel Networks for Inerter-Based Mechanical Control}, 
  year={2021},
  volume={66},
  number={1},
  pages={340-345},
  keywords={Impedance;Resistors;Capacitors;Inductors;Poles and zeros;Network synthesis;Passive networks;Biquadratic impedances;double poles and zeroes;inerter;mechanical control;passive network synthesis},
  doi={10.1109/TAC.2020.2979392}}

@ARTICLE{8967022,
  author={Hughes, Timothy H.},
  journal={IEEE Transactions on Automatic Control}, 
  title={Minimal Series-Parallel Network Realizations of Bicubic Impedances}, 
  year={2020},
  volume={65},
  number={12},
  pages={4997-5011},
  keywords={Impedance;Energy storage;Resistors;Capacitors;Mechanical systems;Inductors;Complexity theory;Electric circuits;inerter;mechanical systems;minimality;networks;passivity;realization},
  doi={10.1109/TAC.2020.2968859}}

@book{morelli2019passive,
  title={Passive Network Synthesis: an approach to classification},
  author={Morelli, Alessandro and Smith, Malcolm Clive},
  year={2019},
  publisher={SIAM}
}

@article{essert2022vsare,
  title={{\v{S}}are's algebraic systems},
  author={Essert, Mario and {\v{Z}}ubrini{\'c}, Darko},
  journal={Acta mathematica Spalatensia},
  volume={2},
  pages={1--26},
  year={2022},  note={\url{https://amas.pmfst.unist.hr/ams/show_paper.php?id=61}}
}

@article{Essert2017,
  author  = {Essert, M. and Kuzmanovi\'c, I. and Vazler, I. and {\v Z}ili\'c, T.},
  title   = {Theory of M-system},
  journal = {Logic Journal of the IGPL},
  volume  = {25},
  number  = {5},
  pages   = {836--858},
  year    = {2017}
}

@article{ChenSmith2009,
  author  = {Chen, Michael Z. Q. and Smith, Malcolm C.},
  title   = {A Note on Tests for Positive-Real Functions},
  journal = {IEEE Transactions on Automatic Control},
  volume  = {54},
  number  = {2},
  pages   = {390--393},
  year    = {2009},
  doi     = {10.1109/TAC.2008.2008351},
  url     = {https://doi.org/10.1109/TAC.2008.2008351}
}

@article{JiangSmith2012,
  author  = {Jiang, Jason Zheng and Smith, Malcolm C.},
  title   = {Series-Parallel Six-Element Synthesis of Biquadratic Impedances},
  journal = {IEEE Transactions on Circuits and Systems I: Regular Papers},
  year    = {2012},
  volume  = {59},
  number  = {11},
  pages   = {2543--2554},
  month   = nov,
  doi     = {10.1109/TCSI.2012.2206492},
  url     = {https://doi.org/10.1109/TCSI.2012.2206492}
}

@article{Jiang2011,
  author = {Jiang, J.Z. and Smith, M.C.},
  journal = {IEEE Transactions on Automatic Control},
  title = {Regular Positive-Real Functions and Five-Element Network Synthesis for Electrical and Mechanical Networks},
  volume = {56},
  number = {6},
  pages = {1275--1290},
  year = {2011},
  doi = {10.1109/TAC.2010.2077810}
}

@mastersthesis{ladenheim1948synthesis,
  title={A synthesis of biquadratic impedances},
  author={Ladenheim, Edward L},
  year={1948},
  school={Polytechnic Institute of Brooklyn}
}

@book{Youla2016,
  author    = {Youla, Dante C.},
  title     = {Theory and Synthesis of Linear Passive Time-Invariant Networks},
  publisher = {Cambridge University Press},
  address   = {Cambridge},
  year      = {2016},
  month     = jan,
  isbn      = {9781107122864},
  url       = {https://www.cambridge.org/9781107122864}
}

@phdthesis{Sare1972,
  author  = {{\v S}are, M.},
  title   = {Algebra, topologija i logika elektri\v{c}nih mre\v{z}a (Algebra, Topology, and Logic of Electrical Circuits)},
  school  = {University of Zagreb, Faculty of Electrical Engineering},
  address = {Zagreb},
  year    = {1972},
  note    = {PhD dissertation, in Croatian; heavily relies on \cite{Sare1970}}
}

@book{Sare1970,
  author    = {{\v S}are, M.},
  title     = {m-Brojevi (m-Numbers)},
  publisher = {University of Zagreb, Faculty of Electrical Engineering},
  address   = {Zagreb},
  year      = {1970},
  note      = {309 pp., in Croatian; \url{http://m-brojevi.com/M-brojevi.pdf}}
}

@book{Sare2000,
  author    = {{\v S}are, M.},
  title     = {Jorbologija (Jorbology)},
  publisher = {Element},
  address   = {Zagreb},
  year      = {2000},
  note      = {in Croatian; \url{https://element.hr/proizvod/jorbologija/}}
}

@misc{openai_chatgpt,
  author = {OpenAI},
  title  = {ChatGPT},
  year   = {2025},
  url    = {https://chat.openai.com/},
  note   = {Accessed: 2025-08-03}
}

@misc{xai_grok,
  author = {{xAI}},
  title  = {Grok},
  year   = {2025},
  url    = {https://x.ai/grok},
  note   = {Accessed: 2025-08-03}
}

@phdthesis{phdthesis,
author = {Morelli, Alessandro},
year = {2019},
month = {01},
pages = {},
title = {Synthesis of electrical and mechanical networks of restricted complexity},
doi = {10.17863/CAM.43242}
}
\end{document}